\newcommand{\bra}[1]{\left\langle#1\right|}
\newcommand{\ket}[1]{\left|#1\right\rangle}
\newcommand{\braket}[2]{\left\langle#1\middle|#2\right\rangle}
\newcommand{\unit}{1\kern -3.7pt 1}
\DeclareMathOperator{\tr}{tr}
\DeclareMathOperator{\sgn}{sgn}
\theoremstyle{definition}
\newtheorem{defn}{Definition}[section]
\theoremstyle{plain}
\newtheorem{lem}[defn]{Lemma}
\newtheorem{thm}[defn]{Theorem}
\newtheorem{prop}[defn]{Proposition}
\theoremstyle{remark}
\newtheorem*{remark}{Remark}
\begin{document}

\title{$n$-fold unbiased bases: an extension of the MUB condition}
\author{M\'at\'e Farkas}

\affil{\textit{Institute of Theoretical Physics and Astrophysics, National Quantum Information Centre, Faculty of Mathematics, Physics and Informatics, University of Gda\'nsk, 80-952 Gdansk, Poland}}

\maketitle

\begin{abstract}
I introduce a new notion, that extends the mutually unbiased bases
(MUB) conditons to more than two bases. 
These, I call the nUB conditions, and the corresponding bases $n$-fold
unbiased. They naturally appear while optimizing generic $n$-to-one quantum
random access code (QRAC) strategies.
While their existence in general dimensions is an open question, they
nevertheless give close-to-tight upper bounds on QRAC success probabilities, and
raise fundamental questions about the geometry of quantum states.
\end{abstract}

\section{Introduction}

Mutually unbiased bases (MUBs) are an important notion in quantum information
theory, first studied in the context of optimal
state-determination \cite{ivonovic,wooters_fields}. Later,
they found applications in
entropic uncertainty relations (\cite{maassen_uffink}, and surveys
\cite{uncertainty_survey,uncertainty_survey2}), information locking
\cite{locking1,locking2} and the so-called Mean King's problem
\cite{mean_king1,mean_king2}.
Intuitively speaking, if some classical information is encoded in a basis, then
measuring in a basis unbiased to it reveals nothing about the encoded
information whatsoever (see Section \ref{sec:QRAC} for a formal definition).
There is a great number of papers investigating the existence and constructions
of these bases (see \cite{onMUBs} for a survey, \cite{MUB2-5} for a
classification in dimensions
2-5 and \cite{MUB6_3,MUB6_4,MUB6_1,MUB6_2,MUB6_5,MUB6_6} for the question
of the number of MUBs in
dimension 6). It is known that in any
dimension, there are at least 3, and at most $d+1$ MUBs, the upper bound
being saturated in prime power dimensions. Composite dimensions on the other
hand still remain unsolved.

While the bases are called mutually unbiased, the MUB conditions on $n$ bases
effectively impose only \textit{pairwise} mutual unbiasedness. In this paper, I
introduce a new
notion, which is a global constraint on $n$ bases, that I call \textit{
$n$-fold unbiased bases} ($n$UBs). These conditions naturally arise, while
extending some methods of \cite{QRACMUB}. There, the authors prove
that MUBs provide optimal measurements in
the so-called quantum random access code (QRAC) protocol in the two-input case.
$n$-fold unbiased bases then generalize this optimization task to $n$ inputs.

The above mentioned QRACs are a basic information theoretical
protocol, used in many contexts within quantum information theory (for a
comprehensive generic description, see \cite{QRACSR}). Loosely speaking, the
task is to compress $n$ dits into one (quantum) dit, and to be able to recover
one randomly choosen dit with high probability (see Section \ref{sec:QRAC} for
a formal treatment). First, it appeared in \cite{conjugate_coding}, and was
called conjugate coding. Later, it was studied in the context of quantum finite
automata \cite{finite_automata2,finite_automata1,finite_automata3}, quantum
communication complexity
\cite{complexity1,complexity2,complexity3,complexity4}, network coding
\cite{network1,network2}, and locally decodable
codes \cite{locally1,locally2,locally3,locally4}. Recently, it is used also for
``quantumness
witness'', that is, experimentally distinguishing different product structures
of fixed dimensional systems \cite{QRAC1024}. Its versatile use is the
consequence of its simplicity, and the fact that it provides quantum advantage
over classical strategies.

The paper is organized as follows: in Section \ref{sec:QRAC}, I formally
describe $n^d\to1$ QRACs, and cite the result stating that MUBs are optimal in a
$2^d\to1$ QRAC scenario. In Section \ref{sec:nUB}, I give a formal definition
of $n$-fold unbiased bases, and state my main theorem about their optimality in
$n^d\to1$ QRACs, whenever $d\ge n$. Then, in Section \ref{sec:methods}, I give
a rigorous proof of the main theorem. In Section \ref{sec:existence}, I address
the problem of existence of $n$-fold unbiased bases with some rigorous results
in low dimensions, but leaving the general question open. Section
\ref{sec:applications} focuses on applications, mainly considering QRACs, but
also outlining some other potential applications. Finally, in Section
\ref{sec:foundations} I consider two foundational issues connected to the $n$UB
construction: the geometry of quantum states, and the question of genuine $n$-th
order interference.

\section{Quantum random access codes}\label{sec:QRAC}

The short description of an $n^d\to1$ quantum random access code (QRAC) is as
follows (see Fig. \ref{fig:qrac}). Alice is given a
classical input $x=\{x_1,x_2,\ldots,x_n\}$, which is a string of dits, i.e.
$x_i\in[d]$, where I use the notation $[d]=\{1,2,\ldots,d\}$. Alice
then is allowed to send one $d$-dimensional (quantum) state to Bob, denoted by
$\rho_x$, depending on her input. Bob is given a classical input $y\in[n]$, and
his task is to guess $x_y$. Generally, he makes his guess by
performing a measurement $\mathcal{M}^y$ on the state, depending on his input,
where $\mathcal{M}^y=\{ M_b^y\}_{b=1}^{d}$.
The measurement satisfies the usual conditions:
$\sum_{b=1}^dM^y_b=\unit$ and $M_b^y\ge0$. The usual question is: what
states and measurements give the optimal strategy for a QRAC? By optimality, in
the following, I mean maximal average success probability (ASP):

\begin{equation}\label{eq:asp}
\bar{p}=\frac1{nd^n}\sum_{x,y}\mathbb{P}(B=x_y\vert X=x,Y=y)=\frac1{nd^n}\sum_
{x,y}\tr(\rho_xM^y_{x_y}),
\end{equation}
where the capital letters denote the probabilistic variables of the
corresponding lower-case symbols, and $x$ and $y$ run along all their possible
values (I implicitly assume unifrom distribution on the inputs, see 
\cite{QRACSR}).

\begin{figure}[h!]
\begin{center}
\includegraphics[height=3cm]{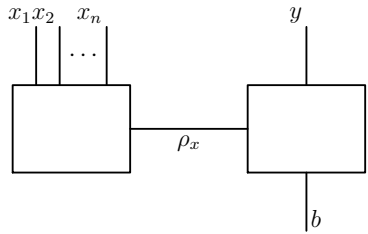}
\caption{Schematic representation of the $n^d\to1$ QRAC protocol.\label{fig:qrac}}
\end{center}
\end{figure}

Some cases are already well-studied \cite{QRACMUB}, and it is proven that
in the $n=2$ case, mutually unbiased measurements (or, more abstractly, mutually
unbiased bases, MUBs) give the optimal strategy. This can always
be done, since there exists a pair of MUBs in any dimension. For the readers'
convenience, I recall the definition of MUBs.

\begin{defn}Consider two orthogonal bases on $\mathbb{C}^d$, $\{\ket{y_i}\}_
{i=1}^d$ and $\{\ket{z_j}\}_{j=1}^d$. We say that these bases are mutually
unbiased, if they satisfy
\begin{equation}\label{eq:MUB}
\lvert \braket{y_i}{z_j} \rvert=\frac{1}{\sqrt{d}} ~~~ \forall i,j\in[d].
\end{equation}
\end{defn}

Using this definition, the following theorem is proven in \cite{QRACMUB}:

\begin{thm}\label{thm:main}For a $2^d\to1$ QRAC, the optimal
strategy is obtained by measuring in MUBs of dimension $d$.
\end{thm}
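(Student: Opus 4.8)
The plan is to set up the QRAC optimization as a semidefinite-type problem and find the optimal states and measurements by a symmetrization-plus-bounding argument. I would start by writing the average success probability \eqref{eq:asp} for $n=2$ explicitly. Since Alice's optimal encoding of a given string $x=\{x_1,x_2\}$ should be the state maximizing Bob's average guessing probability, I would first fix Bob's two measurements $\mathcal{M}^1=\{M_b^1\}$ and $\mathcal{M}^2=\{M_b^2\}$ and optimize over Alice's states. Because the objective is linear in $\rho_x$, the optimal $\rho_x$ is a pure state given by the top eigenvector of the ``guessing operator'' $G_x=M_{x_1}^1+M_{x_2}^2$, so that the contribution of input $x$ becomes the largest eigenvalue $\lambda_{\max}(M_{x_1}^1+M_{x_2}^2)$. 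This reduces \eqref{eq:asp} to
\begin{equation}\label{eq:reduced}
\bar p=\frac{1}{2d^2}\sum_{x_1,x_2}\lambda_{\max}\bigl(M_{x_1}^1+M_{x_2}^2\bigr).
\end{equation}

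Next I would derive an upper bound on each $\lambda_{\max}$. Using $\lambda_{\max}(A)\le\tr(A)/r+\|A-\tr(A)\unit/d\|$-type estimates, or more simply the bound $\lambda_{\max}(M_{x_1}^1+M_{x_2}^2)\le \tfrac{1}{d}\tr(M_{x_1}^1+M_{x_2}^2)+\|M_{x_1}^1+M_{x_2}^2-\tfrac1d\tr(\cdots)\unit\|$, I would try to bound the sum over all $x$ by a quantity that is maximized precisely when the effects are rank-one projectors forming two bases, and when those bases are mutually unbiased. The key algebraic fact I would exploit is that for two rank-one projectors $P=\ket{y_i}\bra{y_i}$ and $Q=\ket{z_j}\bra{z_j}$ one has $\lambda_{\max}(P+Q)=1+\lvert\braket{y_i}{z_j}\rvert$, so the objective directly couples to the MUB overlaps in \eqref{eq:MUB}. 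Summing $1+\lvert\braket{y_i}{z_j}\rvert$ over all $i,j$ and invoking a concavity/Cauchy-Schwarz argument together with the completeness relations $\sum_i\lvert\braket{y_i}{z_j}\rvert^2=1$ should show the sum of overlaps is maximized when every overlap equals $1/\sqrt d$, i.e.\ exactly the MUB condition; this pins down both that the optimal measurements are projective (rank-one) and that the two bases are mutually unbiased.

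I expect the main obstacle to be the reduction from arbitrary POVMs to rank-one projective measurements, i.e.\ showing that no non-projective or higher-rank effects can do better. Handling this rigorously requires care: one must argue that coarse-graining or purifying a general POVM cannot increase $\lambda_{\max}(M_{x_1}^1+M_{x_2}^2)$ summed over inputs, which typically follows from convexity of $\lambda_{\max}$ together with the constraint $\sum_b M_b^y=\unit$, but the interplay of the two different measurement bases makes the extremality argument delicate. A clean way to sidestep part of this difficulty is to first prove a dimension-dependent upper bound on $\bar p$ that holds for all strategies, and then exhibit the MUB strategy as achieving it; the matching of the bound then forces projectivity and the unbiasedness condition simultaneously. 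Finally I would verify achievability by direct computation: plugging a pair of MUBs into \eqref{eq:reduced} yields $\bar p=\tfrac12\bigl(1+\tfrac1{\sqrt d}\bigr)$, confirming the bound is tight and completing the proof.
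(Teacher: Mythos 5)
Your proposal follows essentially the same route as the paper (whose proof of this theorem is deferred to the cited reference but is recapitulated in Section \ref{sec:methods} for general $n$): reduce to pure-state encoding and rank-one projective measurements, identify the per-input score with $\lambda_{\max}$ of a sum of projectors ($=1+\lvert\braket{y_i}{z_j}\rvert$ for $n=2$), and then use the conserved quantity $\sum_{i,j}\lvert\braket{y_i}{z_j}\rvert^2=d$ together with concavity to force all overlaps to $1/\sqrt d$, which is exactly the paper's Schur-concavity argument specialized to the coefficient $\boldsymbol{c}_2$. The argument is correct, including your correct flagging of the POVM-to-projective reduction as the step requiring an external lemma (Lemma \ref{lem:pure_measurement} in the paper).
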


\section{$n$-fold unbiased bases}\label{sec:nUB}

In the following, using a similar line of argument with which it's proven that
MUBs are optimal for $2^d\to1$ QRACs, I will show that a natural generalization
of these bases provide optimal strategies for $n^d\to1$ QRACs. Let me give the
definition of the mentioned generalization. Later on, I will show that this
condition arises naturally in the QRAC scenario.

\begin{defn}\label{def:nUB}Consider $n$ orthogonal bases on $\mathbb{C}^d$,
$\{\ket{y_{x_y}}\}_
{x_y=1}^d$, where $y=1,\ldots,n$. We say that these bases are \textit{$n$-fold
unbiased}, if they satisfy
\begin{equation}\label{eq:nUB}
\sum_{\substack{\sigma\in S_n \\ \sigma\text{: n-cycle}}}\prod_{y=1}^n
\braket{y_{x_y}}{\sigma(y)_
{x_{\sigma(y)}}}=\frac{(n-1)!}{d^{n-1}} ~~~ \forall x_1,\ldots,x_n\in[d].
\end{equation}
\end{defn}

\begin{remark}
Here $\sigma$ is an element of the permutation group $S_n$. Note that the
essence of this criterion is that these terms should be uniform for each
$x_1,\ldots,x_n\in[d]$. The particular value comes from the restriction when
we sum up over all $x_1,\ldots,x_n$, and that there are $(n-1)!$ $n$-cycles in
$S_n$. Also note, that for $n=2$, we get the MUB condition, Eq.
(\ref{eq:MUB}).
\end{remark}

Now, my main result concerning QRAC strategies is the following:

\begin{thm}\label{thm:main}For an $n^d\to1$ QRAC with $d\ge n$, the optimal
strategy is obtained by measuring in $n\text{UB}$s of dimension $d$.
\end{thm}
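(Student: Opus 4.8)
The plan is to mirror the $n=2$ argument used for the MUB case at the level of the single ``decoding'' operator attached to each input string. First I would fix Bob's measurements and exploit that $\bar p$ is linear in each state $\rho_x$: writing $S_x=\sum_{y=1}^n M^y_{x_y}$, the inner sum in Eq.~(\ref{eq:asp}) is $\tr(\rho_x S_x)$, which is maximised by putting $\rho_x$ on a top eigenvector of $S_x$. Hence, after optimising Alice's encoding,
\begin{equation}
\bar p=\frac{1}{nd^n}\sum_x\lambda_{\max}(S_x).
\end{equation}
A convexity/extreme-point argument should let me restrict to rank-one projective measurements $M^y_b=\ket{y_b}\bra{y_b}$, so that $S_x=\sum_{y=1}^n\ket{y_{x_y}}\bra{y_{x_y}}$ is a sum of $n$ rank-one projectors whose nonzero spectrum coincides with that of the $n\times n$ Gram matrix $G_x$ with unit diagonal and off-diagonal entries $\braket{y_{x_y}}{y'_{x_{y'}}}$.

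Next I would extract an $n$-th root via a power-mean (H\"older) step across the $d^n$ inputs, exactly as Cauchy--Schwarz is used for $n=2$. Since $S_x\succeq0$ one has $\lambda_{\max}(S_x)^n\le\tr(S_x^n)=\tr(G_x^n)$, and H\"older with exponent $n$ over the $d^n$ strings gives
\begin{equation}
\sum_x\lambda_{\max}(S_x)\le d^{\,n-1}\Big(\sum_x\tr(S_x^n)\Big)^{1/n},\qquad\text{so}\qquad \bar p\le\frac{1}{nd}\Big(\sum_x\tr(S_x^n)\Big)^{1/n}.
\end{equation}
The first inequality is tight iff $\lambda_{\max}(S_x)$ is independent of $x$, and this uniformity is what I expect to coincide with Definition~\ref{def:nUB}.

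To see the $n$UB condition emerge, I would expand $\tr(S_x^n)$ as a sum over closed length-$n$ walks on the $n$ bases and use $\tr(\ket{y_{x_y}}\bra{y_{x_y}}\cdots)=\prod\braket{\cdot}{\cdot}$. The walks that visit all $n$ bases once are the Hamiltonian cycles; by cyclicity of the trace the $n!$ such orderings collapse to $n$ copies of $\sum_{\sigma\text{ n-cycle}}\prod_{y=1}^n\braket{y_{x_y}}{\sigma(y)_{x_{\sigma(y)}}}$, i.e. exactly $n$ times the left-hand side of Eq.~(\ref{eq:nUB}). Summing this Hamiltonian part over all $x$ returns the configuration-independent value $n!\,d$ (each basis summed once gives $\unit$ by completeness), which forces the uniform per-string value $(n-1)!/d^{n-1}$ in the equality case --- precisely the $n$UB constant. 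This is the step that both explains where $(n-1)!/d^{n-1}$ comes from and shows that $n$UB measurements saturate the convexity bound.

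The hard part is the \emph{tightness} of the spectral estimate. The bound $\lambda_{\max}(S_x)^n\le\tr(S_x^n)$ discards all but the top eigenvalue, and $\tr(S_x^n)$ also contains the non-Hamiltonian closed walks, whose contributions involve higher powers of the pairwise overlaps and are genuinely configuration-dependent once $n\ge4$; these are exactly what make the generic estimate only \emph{close-to-tight}. Turning it into an exact optimality statement therefore requires either an exact handle on $\lambda_{\max}(G_x)$ --- available in closed form for $n=2$, where it equals $1+|\braket{y_i}{z_j}|$ with the overlap of Eq.~(\ref{eq:MUB}), but not for $n\ge3$ --- or a sharper inequality whose equality locus is precisely the $n$UB condition, together with the hypothesis $d\ge n$, which guarantees that the $n$ projectors can be linearly independent so that the extremal spectrum of $G_x$ is actually realisable. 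I expect reconciling these two requirements --- matching the power-sum upper bound to the $\lambda_{\max}$ that $n$UB measurements genuinely produce --- to be the crux of the proof.
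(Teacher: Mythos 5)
Your setup (pure states, rank-one projective measurements, reduction to the Gram matrix, and $\bar p=\frac{1}{nd^n}\sum_x\lambda_{\max}(S_x)$) matches the paper, and your identification of the Hamiltonian-cycle part of closed walks with the $n$UB expression and of its constant sum over $x$ is the same completeness-relation computation that appears in the paper. But the central step of your argument has a genuine gap, which you yourself flag without resolving: the chain $\lambda_{\max}(S_x)^n\le\tr(S_x^n)$ plus H\"older cannot be turned into the optimality statement. Three things go wrong. First, $\lambda_{\max}(S_x)^n\le\tr(S_x^n)$ is saturated only when $S_x$ is rank one, which never happens for $n$ distinct basis vectors, so $n$UB measurements do not meet this bound. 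Second, $\sum_x\tr(S_x^n)$ is \emph{not} configuration-independent: the non-Hamiltonian closed walks revisit a basis index $y$, so the same $x_y$ appears in several factors and the completeness relation $\sum_{x_y}\ket{y_{x_y}}\bra{y_{x_y}}=\unit$ cannot be applied; hence your right-hand side is not even a fixed number. Third, the equality condition of your H\"older step (uniformity of $\lambda_{\max}(S_x)$ over $x$) is not equivalent to Definition~\ref{def:nUB}. So the proposal establishes neither a valid fixed upper bound nor that $n$UBs attain it.

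The paper closes exactly this gap by replacing power traces with the coefficients of the characteristic polynomial, i.e. the sums of principal minors $(c_k)_x=(-1)^k\sum_{|J|=k}M_x[J]$. Because a determinant is a sum over \emph{permutations}, each basis index occurs exactly once in every product, so the completeness relations do apply and $\sum_x(c_k)_x$ is a constant depending only on $d$ (Lemma~\ref{lem:ck_constant}); this is the conserved quantity your $\tr(S_x^n)$ fails to provide. Then Lemma~\ref{lem:concavity} shows, via a perturbative argument and Gauss--Lucas, that $\lambda_x^{\max}$ is concave in each $(c_k)_x$, so $\bar p$ is Schur-concave in each vector $\boldsymbol{c}_k$ and is maximised when all of them are uniform over $x$. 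Decomposing the permutations in $M_x[J]$ into disjoint cycles shows that the only genuinely new term in $(c_k)_x$ is the $k$-cycle sum, i.e. the $k$UB expression with uniform value $(k-1)!/d^{k-1}$, and Theorem~\ref{thm:n_to_n-1} (an $n$UB restricts to a $(k)$UB on every subset) collapses all of these conditions into the single condition of Definition~\ref{def:nUB}. If you want to complete your proof, this passage from power sums to elementary symmetric functions of the spectrum, together with the concavity and conservation lemmas, is the missing machinery.
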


In the next section, I provide the methods for proving the above theorem.

\section{Methods}\label{sec:methods}

In order to prove the main theorem, several results are needed on QRAC
strategies.
The following lemmas allow us to only use pure states on both the encoding and
the decoding sides:

\begin{lem}\label{lem:pure_encoding}
For an $n^d\to1$ QRAC, pure state encoding is sufficient to reach an optimal
strategy.
\end{lem}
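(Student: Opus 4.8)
The plan is to prove that convexity of the average success probability in the encoding states allows us to restrict attention to pure states without loss of optimality. The key observation is that the ASP in Eq.~(\ref{eq:asp}) is, for fixed measurements $\{\mathcal{M}^y\}$, an \emph{affine} (in particular convex) function of each encoding state $\rho_x$ separately, since $\bar p$ depends on $\rho_x$ only through the linear terms $\tr(\rho_x M^y_{x_y})$. The idea is then standard: affine functions on a convex set attain their maximum at an extreme point, and the extreme points of the set of density operators on $\mathbb{C}^d$ are exactly the pure states (rank-one projectors).

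First I would fix an arbitrary but optimal strategy, consisting of a measurement family $\{\mathcal{M}^y\}_{y=1}^n$ and encoding states $\{\rho_x\}_{x\in[d]^n}$. The crucial point is that the objective $\bar p$ decouples over the different inputs $x$: for each fixed string $x$, the contribution to $\bar p$ is $\frac{1}{nd^n}\sum_{y=1}^n \tr(\rho_x M^y_{x_y})$, and this is the only place $\rho_x$ appears. Hence I can optimize over each $\rho_x$ independently. For a fixed $x$, define the Hermitian operator $A_x = \sum_{y=1}^n M^y_{x_y}$; then the $x$-contribution equals $\frac{1}{nd^n}\tr(\rho_x A_x)$, a linear functional of $\rho_x$.

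Next I would invoke the spectral decomposition to carry out the maximization explicitly. Writing the optimal $\rho_x = \sum_k \lambda_k \ket{\phi_k}\bra{\phi_k}$ with $\lambda_k \ge 0$, $\sum_k \lambda_k = 1$, linearity gives $\tr(\rho_x A_x) = \sum_k \lambda_k \bra{\phi_k} A_x \ket{\phi_k}$, a convex combination of the values $\bra{\phi_k} A_x \ket{\phi_k}$. Therefore replacing $\rho_x$ by the pure state $\ket{\phi_{k^\star}}\bra{\phi_{k^\star}}$, where $k^\star$ maximizes $\bra{\phi_k} A_x \ket{\phi_k}$, can only increase (and never decrease) the $x$-contribution. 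Equivalently, one may simply take the eigenvector of $A_x$ with largest eigenvalue. Doing this for every $x\in[d]^n$ yields a pure-state encoding whose ASP is at least that of the original optimal strategy, hence itself optimal, which proves the claim.

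This argument is essentially a routine convexity/extreme-point observation, so there is no serious obstacle. The only point requiring a little care is to notice that the measurements are held fixed throughout, so that the objective really is linear in each $\rho_x$; the optimization over measurements is a separate matter handled elsewhere. I would also remark that the same reasoning shows one never needs to use a mixed encoding even when the optimal strategy is non-unique, so pure-state encoding is sufficient to \emph{reach} an optimum, which is exactly the statement of the lemma.
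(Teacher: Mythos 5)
Your proof is correct. The paper itself does not give an argument here---it simply defers to the reference \cite{QRACSR}---so you have supplied the standard convexity/extreme-point argument that the citation stands in for. Your reasoning is sound at every step: for fixed measurements the ASP in Eq.~(\ref{eq:asp}) decouples over the inputs $x$ and is linear in each $\rho_x$ through $\tr(\rho_x A_x)$ with $A_x=\sum_y M^y_{x_y}$, so a spectral decomposition of $\rho_x$ exhibits its contribution as a convex combination of pure-state contributions, and replacing $\rho_x$ by the top eigenvector of $A_x$ cannot decrease the objective. This is exactly the fact the paper relies on immediately afterwards, where it defines the operator $M_x$ in Eq.~(\ref{eq:Mx_op}) and takes the optimal encoding to be its eigenvector of largest eigenvalue, leading to Eq.~(\ref{eq:asp_eigenvalue}); your derivation makes that step self-contained rather than delegated to the literature. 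No gaps.
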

\begin{proof}
See \cite{QRACSR}.
\end{proof}

This means, that in fact $\rho_x=\ket{\psi_x}\bra{\psi_x}$, a pure state on
$\mathbb{C}^d$.
Next, it is shown in \cite{QRAC1024}, that von Neumann measurements are
optimal.

\begin{lem}\label{lem:pure_measurement}
For an $n^d\to1$ QRAC, von Neumann measurements are sufficient to reach an
optimal strategy.
\end{lem}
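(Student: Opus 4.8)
The plan is to reduce the optimisation of the average success probability to an independent, \emph{linear} optimisation of each decoding measurement, and then to invoke convexity. First I would apply Lemma \ref{lem:pure_encoding} to fix an optimal pure-state encoding $\rho_x=\ket{\psi_x}\bra{\psi_x}$. Grouping the inputs $x$ in Eq. (\ref{eq:asp}) according to the value $x_y=b$ that Bob must guess, one obtains
\begin{equation}
\bar p=\frac1{nd^n}\sum_{y=1}^n\sum_{b=1}^d\tr\!\big(R_b^y M_b^y\big),\qquad R_b^y:=\sum_{x:\,x_y=b}\ket{\psi_x}\bra{\psi_x}\ge0.
\end{equation}
The crucial observation is that each measurement $\mathcal{M}^y$ enters only the $y$-th term, so for a fixed encoding the $n$ measurements may be optimised independently. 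For each $y$ this is the maximisation of the \emph{linear} functional $\mathcal{M}^y\mapsto\sum_b\tr(R_b^y M_b^y)$ over the set $\mathcal{P}_d$ of $d$-outcome POVMs on $\mathbb{C}^d$, which is convex and compact; hence the maximum is attained at an extreme point.

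The two ingredients I would then assemble are as follows. First, a spectral refinement: decomposing each $M_b^y$ into its rank-one spectral components and relabelling every component back to the guess $b$ leaves every term $\tr(\rho_x M^y_{x_y})$ unchanged, so rank-one measurement elements are without loss of generality. Second, orthonormality by counting: if a rank-one resolution $\sum_{b=1}^d\mu_b\ket{v_b}\bra{v_b}=\unit$ uses exactly $d$ elements on $\mathbb{C}^d$, the vectors $\{\ket{v_b}\}$ must be linearly independent, so collecting $\sqrt{\mu_b}\ket{v_b}$ as the columns of a matrix $V$ gives $VV^\dagger=\unit$; thus $V$ is unitary, $\mu_b=1$, and $\{\ket{v_b}\}$ is an orthonormal basis — that is, a von Neumann measurement.

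The main obstacle is bridging these two ingredients. Refinement can produce strictly more than $d$ rank-one outcomes, and extreme points of $\mathcal{P}_d$ need not be rank-one (already in $d=3$ a POVM with rank profile $(2,1,1)$ can be extremal), so convexity alone does not deliver projectivity. The hard part is to show that the \emph{joint} optimum over encoding and decoding can be realised with exactly $d$ orthonormal directions for each $y$. I would close this gap by combining the SDP optimality (complementary slackness) conditions for the measurement — there exists a Hermitian $K^y$ with $K^y\ge R_b^y$ and $M_b^y(K^y-R_b^y)=0$ for all $b$ — with the optimality of the encoding, namely that each $\ket{\psi_x}$ is a leading eigenvector of $\sum_y M^y_{x_y}$; the interplay of these support conditions forces the surviving measurement directions to be mutually orthogonal. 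This coupling between the two optimisations is the step I expect to be the most delicate, and is precisely where the detailed analysis of \cite{QRAC1024} enters.
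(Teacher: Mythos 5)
The paper offers no argument of its own for this lemma --- its ``proof'' is the single line ``See \cite{QRAC1024}'' --- so the only question is whether your blind attempt stands on its own. It does not, and you have in fact located the failure point yourself. The setup is sound: pure encoding via Lemma \ref{lem:pure_encoding}, the regrouping $\bar p=\frac1{nd^n}\sum_{y,b}\tr(R_b^yM_b^y)$ with $R_b^y=\sum_{x:x_y=b}\ket{\psi_x}\bra{\psi_x}$, the decoupling of the $n$ measurements, and the two ``ingredients'' (refinement to rank-one effects is free; a rank-one resolution of $\unit$ with exactly $d$ elements in $\mathbb{C}^d$ is automatically an orthonormal basis). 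But neither ingredient produces the hypothesis the other needs: extremality of the POVM does not force rank-one effects (your $(2,1,1)$ example is correct --- by the D'Ariano--Lo Presti--Perinotti criterion such a POVM can be extremal in $d=3$), and spectral refinement can leave you with more than $d$ rank-one directions, at which point the counting argument for orthonormality is unavailable. So the conclusion ``von Neumann measurements suffice'' is never reached.

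The patch you sketch does not close this. For a \emph{fixed} optimal encoding, optimising $\mathcal{M}^y$ is a minimum-error discrimination of the $d$ mixed states $R_b^y$, and the Holevo--Yuen--Kennedy--Lax conditions $K^y\ge R_b^y$, $M_b^y(K^y-R_b^y)=0$ only confine each $M_b^y$ to the kernel of $K^y-R_b^y$; nothing in those conditions makes these kernels mutually orthogonal, and minimum-error discrimination of mixed states is not in general solved by projective measurements. Likewise, the encoding optimality condition (each $\ket{\psi_x}$ a leading eigenvector of $\sum_yM^y_{x_y}$) is a statement about $n$ effects across different $y$'s, not about the $d$ effects of a single $\mathcal{M}^y$, so it is not evident how ``the interplay of these support conditions forces the surviving measurement directions to be mutually orthogonal'' --- this is asserted, not argued, and it is exactly the content you would need to import from \cite{QRAC1024}. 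As written, the proposal proves that rank-one POVMs with at most $d^2$ outcomes suffice, which is strictly weaker than the lemma; the step from there to $d$ orthonormal projectors is the whole theorem and remains open in your write-up.
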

\begin{proof}
See \cite{QRAC1024}.
\end{proof}

Which means, that in fact, $\{M^y_b\}_b=\{\ket{y_b}\bra{y_b}\}_b$,
where $\ket{y_b}\in\mathbb{C}^d$ and $\sum_b\ket{y_b}\bra{y_b}
=\unit$ for each $y$.

It is then rather straightforward, and shown in \cite{QRACMUB}, that for any
set of measurements on Bob's side, the optimal encoding for an input $x$ is the
eigenvector $\ket{\psi_x}$ of the operator
\begin{equation}\label{eq:Mx_op}
M_x=\sum_y\ket{y_{x_y}}\bra{y_{x_y}},
\end{equation}
that corresponds to the largest eigenvalue $\lambda^{\text
{max}}_x$. The ASP then becomes

\begin{equation}\label{eq:asp_eigenvalue}
\bar{p}=\frac1{nd^n}\sum_x\lambda^{\text{max}}_x.
\end{equation}

The task is to maximize this expression by choosing optimal measurements. I
will concentrate on the characteristic polynomial of $M_x$, since its zeroes
give (among other eigenvalues) $\lambda_x^{\text{max}}$. First, note that if
$d\ge n$, in the optimal case we can assume that the vectors $\{\ket{y_
{x_y}}\}_y$ span an $n$-dimensional subspace in $\mathbb{C}^d$ for every $x$.
This is because otherwise the optimization for every $x$ is restricted to a
lower dimensional subspace, giving in general suboptimal results. We can then
consider this set of vectors a (not necessarily orthogonal) basis for this
subspace, and write the matrix of $M_x$ in this basis. It is easy to see that
this will be the Gramian matrix of the set $\{\ket{y_{x_y}}\}_y$, i.e. in
this basis, $(M_x)_{yy'}=\langle y_{x_y}\vert{y'}_{x_{y'}}\rangle$.
For now, I will suppress the index $x$ for notational simplicity, and analyse
the eigenvalue $\lambda^{\text{max}}$ of the operator $M=\sum_y\ket{y
}\bra{y}$.

The characteristic polynomial in general takes the form
\begin{equation}\label{eq:charpoly_general}
\mathcal{P}(\lambda)=\lambda^n+c_1\lambda^{n-1}+c_2\lambda^{n-2}+\cdots+c_n,
\end{equation}
where the coefficients $c_k$ can be written as
\begin{equation}\label{eq:ck}
c_k=(-1)^k\sum_{\vert J\vert=k}M[J],
\end{equation}
where $M[J]$ is the principal minor of the matrix $M$, that corresponds to the
set $J\subseteq\{1,\ldots,n\}$. So, for example, $c_1=-\tr{M}$ and $c_n=
(-1)^n\det{M}$.

The following lemma is crucial in obtaining the optimal measurement bases:
\begin{lem}\label{lem:concavity}
The maximal eigenvalue $\lambda^{\text{max}}$ of the operator $M=\sum_y\ket
{y}\bra{y}$ is a concave function of all the coefficients $c_k$ in the
characteristic polynomial, expressed by Eq. (\ref{eq:ck}).
\end{lem}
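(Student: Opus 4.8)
The plan is to treat $\lambda^{\text{max}}$ as a function of the coefficient vector $(c_1,\dots,c_n)$ defined implicitly by $\mathcal{P}(\lambda^{\text{max}})=0$, and to establish concavity by showing that its Hessian is negative semidefinite. Since $M=\sum_y\ket{y}\bra{y}$ is a Gram matrix it is positive semidefinite, so every root of $\mathcal{P}$ is real and non-negative; in the generic (optimal) situation the top root is simple, which gives $\mathcal{P}'(\lambda^{\text{max}})>0$ and legitimises the implicit function theorem. First I would differentiate $\mathcal{P}(\lambda)=\lambda^n+\sum_{k=1}^n c_k\lambda^{n-k}=0$ with respect to $c_k$ to obtain the gradient $\partial\lambda/\partial c_k=-\lambda^{n-k}/\mathcal{P}'(\lambda)$, evaluated at $\lambda=\lambda^{\text{max}}$.

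Differentiating once more and collecting terms, one finds that for a test direction $u=(u_1,\dots,u_n)$ the Hessian quadratic form factors through the single polynomial $A(\lambda):=\sum_{k=1}^n u_k\lambda^{n-k}$ and its $\lambda$-derivative $A'(\lambda)$, both evaluated at $\lambda=\lambda^{\text{max}}$, namely
\begin{equation*}
u^{\top}Hu=\frac{2A(\lambda)A'(\lambda)\,\mathcal{P}'(\lambda)-\mathcal{P}''(\lambda)\,A(\lambda)^2}{\mathcal{P}'(\lambda)^3}.
\end{equation*}
The denominator is positive, so everything reduces to the sign of the numerator. Here I would invoke two structural facts following from $M\succeq0$: that $\mathcal{P}'(\lambda^{\text{max}})>0$, and that real-rootedness forces $\mathcal{P}''(\lambda^{\text{max}})\ge0$ (writing $\mathcal{P}=\prod_i(\lambda-\mu_i)$, every summand of $\mathcal{P}''$ at the largest root $\mu_{\max}$ is a product of non-negative factors $\mu_{\max}-\mu_l$). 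The term $-\mathcal{P}''A^2$ is then manifestly $\le0$.

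The main obstacle is the cross term $2AA'\mathcal{P}'$, which carries no definite sign, so naive term-by-term bounding is hopeless. To tame it I would try to exploit the constraints peculiar to this problem: the diagonal of the Gram matrix is identically $1$, so $c_1=-\tr M=-n$ is fixed and admissible directions satisfy $u_1=0$, which lowers the degree of $A$; in addition the $c_k$ obey the Newton inequalities coming from real-rootedness, restricting the feasible coefficient vectors. I expect the genuine difficulty — and the step most likely to carry the whole lemma — to be showing that, once these constraints are imposed, the numerator remains non-positive for \emph{all} admissible $u$; equivalently, that $A$ and $A'$ cannot be driven independently large in the directions that survive. This is precisely where the bare characteristic polynomial seems insufficient and the full positive-semidefinite structure of $M$ (rather than just its spectrum) must be brought to bear, so I would treat the sign control of the cross term as the crux rather than the (routine) differentiation that produces the formula above.
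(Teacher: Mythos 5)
Your setup is sound and your Hessian formula is correct --- indeed your bookkeeping is more careful than the paper's own: the cross term $2A(\lambda)A'(\lambda)\mathcal{P}'(\lambda)$ that you isolate is precisely the second-order contribution of the perturbing term $(\tilde{\lambda}^{\text{max}})^{n-k}\mathrm{d}c_k$, which the paper silently drops in passing from Eq.~(\ref{eq:second_order_eq}) to Eq.~(\ref{eq:dc_k^2}) (it treats that term as purely first order even though $\tilde{\lambda}^{\text{max}}$ itself carries a piece linear in $\mathrm{d}c_k$). However, your proposal is not a proof: you explicitly defer the decisive sign estimate, and the goal you set --- joint concavity, i.e.\ a negative semidefinite Hessian --- is both stronger than what the lemma is used for and actually false. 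The paper only needs concavity in each $c_k$ \emph{separately} (that is what feeds the Schur-concavity of the ASP in each vector $\boldsymbol{c}_k$), and joint concavity fails even after imposing your constraint $u_1=0$: already for $n=3$, with $A=u_2\lambda+u_3$ and $v:=u_2\lambda^{\text{max}}+u_3$ the numerator is $2vu_2\mathcal{P}'-\mathcal{P}''v^2$, a quadratic form in the independent variables $(u_2,v)$ with matrix $\bigl(\begin{smallmatrix}0&\mathcal{P}'\\ \mathcal{P}'&-\mathcal{P}''\end{smallmatrix}\bigr)$ and determinant $-(\mathcal{P}')^2<0$, hence indefinite. No Newton-inequality constraint on the $c_k$ will rescue this; you must restrict to coordinate directions from the outset.

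Even on the diagonal your term-by-term bounding does not close: for $u=e_k$ the cross term equals $2(n-k)\lambda^{2n-2k-1}\mathcal{P}'\ge0$, i.e.\ it has the \emph{wrong} sign and must be dominated by the $\mathcal{P}''$ term. This is where positivity of $M$ (not mere real-rootedness) enters, and it is the missing idea. Writing $\mathcal{P}=\prod_i(\lambda-\mu_i)$ with all $\mu_i\ge0$ and $\lambda^{\text{max}}$ simple, one has $\mathcal{P}''(\lambda^{\text{max}})=2\mathcal{P}'(\lambda^{\text{max}})\sum_{j}(\lambda^{\text{max}}-\mu_j)^{-1}$, the sum running over the $n-1$ non-maximal roots; since each ratio $\lambda^{\text{max}}/(\lambda^{\text{max}}-\mu_j)\ge1$, this gives $\lambda^{\text{max}}\,\mathcal{P}''(\lambda^{\text{max}})\ge 2(n-1)\,\mathcal{P}'(\lambda^{\text{max}})\ge 2(n-k)\,\mathcal{P}'(\lambda^{\text{max}})$ for every $k\ge1$, so the diagonal numerator $\lambda^{2n-2k-1}\bigl(2(n-k)\mathcal{P}'-\lambda\mathcal{P}''\bigr)$ is $\le0$ and $\partial^2\lambda^{\text{max}}/\partial c_k^2\le0$, which is the statement actually needed. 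The paper's route is the same one-variable-at-a-time implicit differentiation, concluding via Gauss--Lucas that $\mathcal{P}'$ and $\mathcal{P}''$ are positive at $\lambda^{\text{max}}$; because it loses the cross term it never confronts the estimate above. Your identification of that term as the crux is therefore the genuinely valuable part of the attempt --- but you need to resolve it (and lower your target to separate concavity) rather than defer it.
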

\begin{proof}
First, analyse the characteristic polynomial, now only as a function of
$c_k$, assuming all other coefficients to be constant. (Note that by varying
$c_k$, in reality, we're altering all other $c_l$ coefficients, as we are
altering the measurement bases. For now, I forget about this fact, and am
looking for purely the best solution based on a generic characteristic
polynomial): 
\begin{equation}\label{eq:charpoly_ck}
\mathcal{P}_{c_k}(\lambda)=\lambda^n+c_1\lambda^{n-1}+c_2\lambda^{n-2}+\cdots+
c_n.
\end{equation}
It depends linearly on all coefficients $c_k$, thus
any series expansion is of first order:
\begin{equation}\label{eq:charpoly_expansion}
\mathcal{P}_{c_k+\text{d}c_k}(\lambda)=\mathcal{P}_{c_k}(\lambda)+\lambda^{n-k}
\text{d}c_k,
\end{equation}
where $\text{d}c_k$ is an infinitesimal change in $c_k$. Let's now call
$\tilde{
\lambda}^{\text{max}}$ the maximal zero of this modified polynomial, i.e.
$\mathcal{P}_{c_k+\text{d}c_k}(\tilde{\lambda}^{\text{max}})=0$, whereas
$\mathcal{P}_{c_k}(\lambda^{\text{max}})=0$ from the original problem. I am
interesed in the concavity of $\lambda^{\text{max}}$ in $c_k$, i.e. the sign of
the second derivative $\frac{\partial^2\lambda^{\text{max}}}{\partial c_k^2}$.
For this, expand $\tilde{\lambda}^{\text{max}}$ up to second order:
\begin{equation}\label{eq:lambda_expansion}
\tilde{\lambda}^{\text{max}}=\lambda^{\text{max}}+\text{d}\lambda^{\text{max}}=
\lambda^{\text{max}}+\frac{\partial\lambda^
{\text{max}}}{\partial c_k}\text{d}c_k+\frac12\frac{\partial^2\lambda^{\text
{max}}} {\partial c_k^2}\text{d}c_k^2+\mathcal{O}(\text{d}c_k^3).
\end{equation}
Also, expand $\mathcal{P}_{c_k}(\lambda)$ in $\lambda$ to second order,
around $\lambda^{\text{max}}$:
\begin{equation}\label{eq:charpoly_lambda}
\begin{split}
\mathcal{P}_{c_k}(\tilde{\lambda}^{\text{max}})& \left. =\mathcal{P}_{c_k}
(\lambda^ {\text{max}})+\frac{\partial\mathcal{P}}{\partial\lambda}\bigg\rvert_
{c_k,\lambda^ {\text{max}}} \text{d}\lambda^{\text{max}}+\frac{\partial^2
\mathcal{P}}{\partial\lambda^2}\bigg\rvert_{c_k,\lambda^ {\text{max}}}
(\text{d}\lambda^{\text{max}})^2 \right. \\
& \left. = \frac{\partial\mathcal{P}}{\partial\lambda}\bigg\rvert_
{c_k,\lambda^ {\text{max}}} \text{d}\lambda^{\text{max}}+\frac{\partial^2
\mathcal{P}}{\partial\lambda^2}\bigg\rvert_{c_k,\lambda^ {\text{max}}}
(\text{d}\lambda^{\text{max}})^2, \right.
\end{split}
\end{equation}
as the first term vanishes. Now, evaluate Eq. (\ref{eq:charpoly_expansion}) at
$\tilde{\lambda}^{\text{max}}$ up to second order, using Eqs.
(\ref{eq:lambda_expansion}) and (\ref{eq:charpoly_lambda}):
\begin{equation}\label{eq:second_order_eq}
\begin{split}
0 & \left. =\frac{\partial\mathcal{P}}{\partial\lambda}\bigg\rvert_{c_k,\lambda^
{\text{max}}}\bigg(\frac{\partial\lambda^{\text{max}}}{\partial c_k}\text{d}c_k
+\frac12
\frac{\partial^2\lambda^{\text{max}}}{\partial c_k^2}\text{d}c_k^2\bigg)+\frac12
\frac
{\partial^2\mathcal{P}}{\partial\lambda^2}\bigg\rvert_{c_k,\lambda^{\text{max}}}
\bigg(\frac{\partial\lambda^{\text{max}}}{\partial c_k}\bigg)^2\text{d}c_k^2
\right. \\
& \left. + (\tilde{\lambda}^{\text{max}})^{n-k}\text{d}c_k. \right.
\end{split}
\end{equation}
Since d$c_k$ is an arbitrary infinitesimal, the terms multiplying d$c_k$ and
d$c_k^2$ should be equal independently, yielding the following two equations,
respectively:
\begin{equation}\label{eq:dc_k}
\frac{\partial\mathcal{P}}{\partial\lambda}\bigg\rvert_{c_k,\lambda^
{\text{max}}}\frac{\partial\lambda^{\text{max}}}{\partial c_k}=-(\tilde{\lambda}
^{\text {max}})^{n-k}
\end{equation}
\begin{equation}\label{eq:dc_k^2}
\frac{\partial\mathcal{P}}{\partial\lambda}\bigg\rvert_{c_k,\lambda^
{\text{max}}}\frac{\partial^2\lambda^{\text{max}}}{\partial c_k^2}=-\frac
{\partial^2\mathcal{P}}{\partial\lambda^2}\bigg\rvert_{c_k,\lambda^{\text{max}}}
\bigg(\frac{\partial\lambda^{\text{max}}}{\partial c_k}\bigg)^2.
\end{equation}
Then, analyse the derivatives of $\mathcal{P}(\lambda)$ at $\lambda^
{\text{max}}$. Remember, that $\mathcal{P}(\lambda)$ is the characteristic
polynomial of the operator $M$ in Eq. (\ref{eq:Mx_op}), which is positive. This
means that all its eigenvalues (i.e. the zeroes of $\mathcal{P}(\lambda)$) are
real positive numbers. Then, invoke the Gauss--Lucas theorem (see e.g.
\cite[Theorem 6.1]{gauss_lucas}), that says that
for a polynomial $\mathcal{P}(\lambda)$, all the zeroes of $\frac{\partial
\mathcal{P}}{\partial\lambda}$ belong to the
convex hull of the set of zeroes of $\mathcal{P}$. For us, this means that none
of the derivatives $\frac{\partial\mathcal{P}}{\partial\lambda}$ and
$\frac{\partial^2\mathcal{P}}{\partial\lambda^2}$ change signs outside of the
region $[\lambda^{\text{min}},\lambda^{\text{max}}]$. It is clear from Eq.
(\ref{eq:charpoly_ck}) that both of these derivatives are positive in the limit
$\lambda\to\infty$. For the moment, let's assume that $\lambda^{\text{max}}$ is
a nondegenerate zero of $\mathcal{P}$. Then it follows that both $\frac{\partial
\mathcal{P}}{\partial\lambda}\big\rvert_{c_k,\lambda^{\text{max}}}$ and
$\frac{\partial^2\mathcal{P}}{\partial\lambda^2}\big\rvert_{c_k,\lambda^
{\text{max}}}$ are strictly positive. Using this in Eq. (\ref{eq:dc_k^2}), it
follows that $\frac{\partial^2\lambda^{\text{max}}}{\partial c_k^2}<0$,
proving the lemma.

In the case where $\lambda^{\text{max}}$ is a degenerate zero of $\mathcal{P}$,
we need to investigate also Eq. (\ref{eq:dc_k}). The RHS is always negative,
while $\frac{\partial\mathcal{P}}{\partial\lambda}\big\rvert_{c_k,\lambda^
{\text{max}}}$ is positive in the non-degenerate case. In the degenerate case it
is 0, and the right and left derivatives could differ in the sense that they
could be equal to $0^+$ or $0^-$ (limiting from above or below). This means that
the derivative $\frac{\partial\lambda^{\text{max}}}{\partial c_k}$ equals
$\pm\infty$. Since we are dealing with a polynomial and its zeroes, every
function appearing is smooth. Thus, when approaching the degenerate case by
varying $c_k$, the derivative of $\lambda^{\text{max}}$ cannot suddenly
change from a negative value to $+\infty$. Hence, we can conclude that it equals
$-\infty$, and the derivative $\frac{\partial\mathcal{P}}{\partial\lambda}\big
\rvert_{c_k,\lambda^{\text{max}}}$ that should be considered is the right
derivative, which is always positive. The same observations hold for Eq.
(\ref{eq:dc_k^2}), and we can conclude that $\frac{\partial^2\lambda^
{\text{max}}}{\partial c_k^2}$ is negative in the degenerate case as well.
\end{proof}

Now, I re-introduce the index $x$, and write the maximal eigenvalue for a
given $x$ as a function of the coefficients in the corresponding characteristic
polynomial: $\lambda_x^{\text{max}}\big((c_1)_x,(c_2)_x,\ldots,(c_n)_x\big)$.
Then, we can write the ASP (Eq.
(\ref{eq:asp_eigenvalue})), as a function of the vectors $\boldsymbol{c}_k$,
that contain $(c_k)_x$ for all $x$:
\begin{equation}\label{eq:asp_vector}
\bar{p}(\boldsymbol{c}_1,\boldsymbol{c}_2,\ldots,\boldsymbol{c}_n)=\frac1{nd^n}
\sum_x\lambda^{\text{max}}_x\big((c_1)_x,(c_2)_x,\ldots,(c_n)_x\big).
\end{equation}

The following lemma is very useful to characterize the behaviour of
the ASP as a function of the vectors $\boldsymbol{c}_k$:
\begin{lem}\label{lem:ck_constant}
Consider the operators $M_x$ described by Eq. (\ref{eq:Mx_op}), and the
corresponding coefficients $(c_k)_x$ in the characteristic polynomial,
as in
Eq. (\ref{eq:ck}). Then for every $k$, the sum of $(c_k)_x$ over all $x$ is a
constant dependent only on the dimension.
\end{lem}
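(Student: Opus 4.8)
The plan is to start from the explicit expression for the coefficients in Eq.~(\ref{eq:ck}) and evaluate the sum directly. Summing over all inputs $x=(x_1,\dots,x_n)$ and exchanging the order of summation gives
\[
\sum_x (c_k)_x = (-1)^k \sum_{\lvert J\rvert=k} \sum_x M_x[J],
\]
so it suffices to evaluate $\sum_x M_x[J]$ for a fixed index set $J\subseteq\{1,\dots,n\}$ with $\lvert J\rvert=k$. Since $M_x[J]$ is the principal minor built from the Gramian entries $(M_x)_{yy'}=\braket{y_{x_y}}{y'_{x_{y'}}}$ with $y,y'\in J$, it depends only on those $x_y$ with $y\in J$. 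Summing over the remaining $n-k$ indices, each ranging over $[d]$ freely, therefore contributes a harmless factor $d^{n-k}$, and reduces the problem to computing $\sum_{\{x_y:y\in J\}} M_x[J]$.

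For this, I would expand the minor by the Leibniz formula,
\[
M_x[J] = \sum_{\pi\in S_J}\sgn(\pi)\prod_{y\in J}\braket{y_{x_y}}{\pi(y)_{x_{\pi(y)}}},
\]
where $S_J$ is the symmetric group on $J$, and then push the summation over $\{x_y:y\in J\}$ inside. The key step is that for each permutation $\pi$ the product factorizes over the cycles of $\pi$, and along any cycle $(y_1\,y_2\,\cdots\,y_m)$ the product of consecutive overlaps $\braket{(y_1)_{x_{y_1}}}{(y_2)_{x_{y_2}}}\cdots\braket{(y_m)_{x_{y_m}}}{(y_1)_{x_{y_1}}}$ telescopes under summation: using the completeness relation $\sum_{x_y}\ket{y_{x_y}}\bra{y_{x_y}}=\unit$ for each basis, summing out $x_{y_2},\dots,x_{y_m}$ collapses the chain to $\braket{(y_1)_{x_{y_1}}}{(y_1)_{x_{y_1}}}=1$, and the final sum over $x_{y_1}$ yields a factor $d$ (a fixed point is the case $m=1$ and gives $d$ directly). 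Hence each $\pi$ contributes $d^{c(\pi)}$, where $c(\pi)$ is its number of cycles, so that $\sum_{\{x_y:y\in J\}} M_x[J] = \sum_{\pi\in S_k}\sgn(\pi)\,d^{c(\pi)}$.

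To finish, I would invoke the standard identity $\sum_{\pi\in S_k}\sgn(\pi)\,t^{c(\pi)}=t(t-1)\cdots(t-k+1)$ (which follows from $\sgn(\pi)=(-1)^{k-c(\pi)}$ together with the Stirling-number generating function $\sum_{\pi\in S_k}t^{c(\pi)}=t(t+1)\cdots(t+k-1)$), evaluated at $t=d$. This gives the $J$-independent value $d(d-1)\cdots(d-k+1)$, so collecting the $\binom nk$ choices of $J$ and the factor $d^{n-k}$ yields
\[
\sum_x (c_k)_x = (-1)^k \binom nk\, d^{n-k}\, d(d-1)\cdots(d-k+1),
\]
which depends only on $d$ (with $n$ and $k$ fixed), as claimed. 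The one place that requires genuine care is the cycle-collapse argument: one must check that the completeness relation applies basis-by-basis (the labels along a cycle are distinct elements of $J$, hence correspond to distinct bases, so there is no double counting) and that the bookkeeping of which index survives the telescoping is consistent across all cycle lengths, including fixed points. Everything else is routine combinatorics.
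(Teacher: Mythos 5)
Your proof is correct and follows essentially the same route as the paper's: reduce to the principal minors $\sum_x M_x[J]$, expand by the Leibniz formula, decompose each permutation into cycles, and collapse each cycle to a factor of $d$ via the completeness relations $\sum_{x_y}\ket{y_{x_y}}\bra{y_{x_y}}=\unit$. The only additions are your explicit bookkeeping of the $d^{n-k}$ factor from the indices outside $J$ and the closed-form evaluation $\sum_x (c_k)_x=(-1)^k\binom{n}{k}d^{n-k}\,d(d-1)\cdots(d-k+1)$ via the Stirling-number identity, both of which check out (e.g.\ $k=1$ gives $-nd^n$) but go beyond what the lemma requires.
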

\begin{proof}
From Eq. (\ref{eq:ck}) it is enough to show the statement for principal minors
of the $M_x$ operators. I.e. let me fix $J\subseteq\{1,\ldots,n\}$, $\vert
J\vert =k$, and show that $\sum_{x}M_x[J]$ is a constant, only depending on $d$.
$M_x[J]$ is the Gram determinant of the vectors $\{\lvert y_{x_y}\rangle\}
_{y\in J}$, thus it can be written as
\begin{equation}\label{eq:M_x[J]}
M_x[J]=\sum_{\sigma\in S_k}\sgn(\sigma)\prod_{y\in J}
\braket{y_{x_y}}{{\sigma(y)}_{x_{\sigma(y)}}},
\end{equation}
where $\sigma$ runs over all permutations on $J$. Every permutation $\sigma$
can be decomposed to disjoint cycles $\sigma=\sigma_1\sigma_2\cdots\sigma_r$ on
the disjoint sets $J_1,J_2,\ldots,J_r\subseteq J$ and $\cup_{i=1}^rJ_i=J$. Then
the above expression becomes:
\begin{equation}\label{eq:M_x[J]_decomp}
M_x[J]=\sum_{\sigma\in S_k}\sgn(\sigma)\prod_{y\in J_1}
\braket{y_{x_y}}{{\sigma_1(y)}_{x_{\sigma_1(y)}}}\prod_{y\in J_2}
\braket{y_{x_y}}{{\sigma_2(y)}_{x_{\sigma_2(y)}}}\cdots\prod_{y\in
J_r} \braket{y_{x_y}}{{\sigma_r(y)}_{x_{\sigma_r(y)}}}.
\end{equation}

When we sum up over $x$, it is equivalent to saying that we sum up over all
$x_y$. Hence in the above expression, we can evaluate the sum over
$x$ independently on each product over $y\in J_i$, summing up over all $x_y$
such that $y\in J_i$. Then, for proving the lemma, I have to show that all
expressions of the form
\begin{equation}\label{eq:constterm}
m_i=\sum_{x_y:y\in J_i}\prod_{y\in J_i}
\braket{y_{x_y}}{{\sigma_i(y)}_{x_{\sigma_i(y)}}}
\end{equation}
are constant, depending only on $d$. Let's say that $\sigma_i$ is a $k_i$-cycle.
Clearly, if $k_i=1$, then $m_i=d$. Now,
for $k_i>1$, we can always order the product in a way that the ket and bra
of the same $y_{x_y}$ appear next to each other. Explicitly, it means that
with this ordering:
\begin{equation}\label{eq:constterm2}
m_i=\sum_{x_y:y\in J_i}
\langle{y^\ast}_{x_{y^\ast}}\lvert{\sigma_i(y^\ast)}_{x_{\sigma_i(y
^\ast)}}\rangle\langle{\sigma_i(y^\ast)}_{x_{\sigma_i(y^\ast)}}\lvert
{\sigma^2_i(y^\ast)}_{x_{\sigma^2_i(y^\ast)}}\rangle\langle
{\sigma^2_i(y^\ast)}_{x_{\sigma^2_i(y^\ast)}}\rvert\cdots\lvert
{y^\ast}_{x_{y^\ast}}\rangle,
\end{equation}
for some arbitrary $y^\ast\in J_i$. Now, when summing up over any $x_y$ such
that $y\ne y^\ast$, we get that $\sum_{x_y}\lvert{y}_{x_{y}}\rangle\langle
{y}_{x_{y}}\rvert=\unit$ from the orthonormality of the measurement bases.
Thus
\begin{equation}\label{eq:constterm3}
m_i=\sum_{x_{y^\ast}=1}^d\langle{y^\ast}_{x_{y^\ast}}\lvert{y^\ast}_
{x_{y^\ast}}\rangle=d,
\end{equation}
and the proof is complete.
\end{proof}

It is also important to note that since $M_x$ is positive, all its prinicpal
minors are positive. This means, that the sign of $(c_k)_x$ is $(-1)^k$ for all
$x$. This, together with Lemma \ref{lem:ck_constant} means that we can look at
the vectors $\boldsymbol{c}_k$ as probability distributions normalized to
some constants that depend only on the dimension. Thus, the ASP in Eq. (\ref{eq:asp_vector}) is a function on probability distributions,
and from Lemma \ref{lem:concavity} it follows that it is a Schur-concave function of all
the vectors $\boldsymbol{c}_1,\boldsymbol{c}_2,\ldots,\boldsymbol{c}_n$
\cite[Theorems A.3, A.4.]{inequalities}. But then it follows that it is maximized by all such
distributions set uniform \cite[Proposition B.2.]{inequalities}.

Now, let's discuss what does the uniformity of these vectors mean.
If we follow the argument in the proof of Lemma \ref{lem:ck_constant}, we see
that the only terms that appear in $\boldsymbol{c}_k$, but are not present in
any $\boldsymbol{c}_l$ with $l<k$ are
\begin{equation}\label{eq:ck_newterm}
\sum_{\substack{\sigma\in S_k \\ \sigma\text{: k-cycle}}}\prod_{y\in J}
\braket{y_{x_y}}{{\sigma(y)}_{x_{\sigma(y)}}}=:k\text{UB}_x[J],
\end{equation}
for all $J\subseteq\{1,\ldots,n\}$, $\vert J\vert=k$ (this is because if
$\sigma$ is not a $k$-cycle, it can be decomposed to disjoint $l<k$ cycles, and
the terms we obtain from this decomposition have already appeared in 
$\boldsymbol{c}_l$). Thus, if all $\boldsymbol{c}
_l$ with $l<k$ are uniform, the only task left is to set $k$UB$_x[J]$ uniform
for all $x$. Similarly, as in the proof of Lemma
\ref{lem:ck_constant}, we see that this uniform value is
\begin{equation}\label{eq:ck_newterm_uniform}
k\text{UB}_x[J]=\frac{(k-1)!}{d^{k-1}}
~~~\forall x_{y_1},\ldots,x_{y_k}\in[d],~y_1,\ldots,y_k\in J
\end{equation}
for all $J\subseteq\{1,\ldots,n\}$, $\vert J\vert=k$. This is because
\begin{enumerate*}[label=\itshape\alph*\upshape)] \item
there are $(k-1)!$ $k$-cycles in $S_k$, \item as we saw in Lemma
\ref{lem:ck_constant}, $\sum_x\prod_{y\in J}
\langle y_{x_y}\lvert{\sigma(y)}_{x_{\sigma(y)}}\rangle=d$ if $\sigma$ is a 
$k$-cycle, and \item there
are $d^k$ different $k$UB$_x[J]$s for all possible $x_{y_1},\ldots,x_{y_k}
\in[d]$ \end{enumerate*}. At this point, I note
that it should not be necessary to set the same values for all the possible
$J$ subsets, but eventually it will turn out that this should be the case.
I also refer the reader to Definition \ref{def:nUB}, from where it's clear
that the condition in Eq. (\ref{eq:ck_newterm_uniform}) is equivalent to
saying that in an optimal strategy, every subset $J$ of size $k$ of our
measurement bases should form
a $k$UB. The following theorem on $n$UBs simplifies the criteria on optimal
QRAC measurements, and finalize the proof of Theorem \ref{thm:main}:
\begin{thm}\label{thm:n_to_n-1}
If $n$ orthonormal bases form an $n$UB, then any subset of $n-1$ bases forms an
$(n-1)$UB.
\end{thm}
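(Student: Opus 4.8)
The plan is to prove the statement by fixing, without loss of generality, the subset $J=\{1,\ldots,n-1\}$ (the bases are interchangeable in Definition \ref{def:nUB}, so this is no real restriction), and then \emph{summing the $n$UB condition, Eq. (\ref{eq:nUB}), over the single index $x_n$}. The right-hand side is the constant $\frac{(n-1)!}{d^{n-1}}$, so summing over $x_n\in[d]$ merely multiplies it by $d$, giving $\frac{(n-1)!}{d^{n-2}}$. All the content lies on the left-hand side, where I want to show that this sum collapses to a fixed multiple of the $(n-1)$UB expression for the bases indexed by $J$.

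First I would observe that in any $n$-cycle $\sigma\in S_n$ the element $n$ occurs exactly once as a source and once as a target, so $x_n$ appears in precisely two factors of the product $\prod_{y=1}^n\braket{y_{x_y}}{\sigma(y)_{x_{\sigma(y)}}}$: namely $\braket{a_{x_a}}{n_{x_n}}$ with $a=\sigma^{-1}(n)$, and $\braket{n_{x_n}}{b_{x_b}}$ with $b=\sigma(n)$. Every other factor has both $y\neq n$ and $\sigma(y)\neq n$, hence is independent of $x_n$. Summing over $x_n$ and invoking the resolution of identity $\sum_{x_n}\ket{n_{x_n}}\bra{n_{x_n}}=\unit$ (exactly as in the proof of Lemma \ref{lem:ck_constant}) contracts these two factors into the single inner product $\braket{a_{x_a}}{b_{x_b}}$. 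The result is the product $\prod_{y\in J}\braket{y_{x_y}}{\tau(y)_{x_{\tau(y)}}}$, where $\tau$ is the permutation of $J$ defined by $\tau(a)=b$ and $\tau(y)=\sigma(y)$ for $y\neq a$.

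The combinatorial heart of the argument is to identify $\tau$ and the multiplicity of the map $\sigma\mapsto\tau$. I would verify that deleting $n$ from the cycle $\sigma=(\cdots\,a\,\,n\,\,b\,\cdots)$ and splicing $a$ directly to $b$ produces precisely an $(n-1)$-cycle $\tau$ on $J$; conversely, a given $(n-1)$-cycle $\tau$ arises exactly from the $n$-cycles obtained by inserting $n$ into one of its $n-1$ arrows, so the map is $(n-1)$-to-one. Since the contracted product depends on $\sigma$ only through $\tau$, summing the left-hand side over $x_n$ yields $(n-1)$ copies of the $(n-1)$UB sum $\sum_{\tau:\,(n-1)\text{-cycle}}\prod_{y\in J}\braket{y_{x_y}}{\tau(y)_{x_{\tau(y)}}}$.

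Equating the two sides then gives $(n-1)\sum_{\tau}\prod_{y\in J}\braket{y_{x_y}}{\tau(y)_{x_{\tau(y)}}}=\frac{(n-1)!}{d^{n-2}}$, i.e. the value $\frac{(n-2)!}{d^{n-2}}$, and since this holds for every fixed $x_1,\ldots,x_{n-1}$, the bases in $J$ form an $(n-1)$UB. I expect the main obstacle to be the careful bookkeeping of the cycle-contraction correspondence: checking that $\tau$ is well-defined and genuinely an $(n-1)$-cycle (which uses that an $n$-cycle with $n\ge2$ has no fixed point, so $a,b\in J$), and confirming that the multiplicity is exactly $n-1$ and not something dimension-dependent, so that the factors $(n-1)!/(n-1)=(n-2)!$ combine to reproduce precisely the $(n-1)$UB constant.
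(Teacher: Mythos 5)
Your proposal is correct and follows essentially the same route as the paper's own proof: sum the $n$UB condition over the omitted index, contract the two factors containing it via the resolution of identity, and observe that deleting that element from each $n$-cycle gives an $(n-1)$-to-one map onto the $(n-1)$-cycles, so the left-hand side becomes $(n-1)$ times the $(n-1)$UB sum. Your "splice $a$ to $b$" description and the paper's "cyclically permute the first $n-1$ elements of the chain representation" are the same bookkeeping of that multiplicity.
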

\begin{proof}
Remember that the $n$ bases $\{\ket{y_{x_y}}\}_{x_y=1}^d$ for $y=1,\ldots,n$
form an $n$UB iff
\begin{equation}\label{eq:nUB2}
\sum_{\substack{\sigma\in S_n \\ \sigma\text{: n-cycle}}}\prod_{y=1}^n
\braket{y_{x_y}}{\sigma(y)_
{x_{\sigma(y)}}}=\frac{(n-1)!}{d^{n-1}} ~~~ \forall x_1,\ldots,x_n\in[d].
\end{equation}
Pick an arbitrary subset of $n-1$ bases by omitting one basis, say $\{\ket{k_
{x_k}}\}_{x_k=1}^d$, for some $k\in\{1,\ldots,n\}$. Now, sum up Eq.
(\ref{eq:nUB2}) over the index $x_k$. This, as we saw in the proof of Lemma
\ref{lem:ck_constant}, eliminates the term $\ket{k_{x_k}}\bra{k_{x_k}}$ from
each term of the summand, as $\sum_{x_k}\ket{k_{x_k}}\bra{k_{x_k}}=\unit$. The
summation on the RHS of Eq. (\ref{eq:nUB2}) clearly yields $\frac{(n-1)!}
{d^{n-2}}$, thus the equation now reads as:
\begin{equation}\label{eq:nUBsum}
\sum_{\substack{\sigma\in S_n \\ \sigma\text{: n-cycle}}}\prod_{\substack{y=1 \\
y\ne k}}^n
\braket{y_{x_y}}{\sigma(y)_
{x_{\sigma(y)}}}=\frac{(n-1)!}{d^{n-2}} ~~~ \forall\{x_1,\ldots,x_n\}\setminus\{x_k\}
\in[d].
\end{equation}
We can write the LHS of the above equation (reordering the product to a
desirable form) as:
\begin{equation}\label{eq:nUBsum2}
\begin{split}
\sum_{\substack{\sigma\in S_n \\ \sigma\text{: n-cycle}}} & \left.
\big\langle\sigma(k)_{x_{\sigma(k)}}
\big\lvert \sigma^2(k)_{x_{\sigma^2(k)}}\big\rangle\big\langle\sigma^2(k)_
{x_{\sigma^2(k)}}\big\lvert\cdots \right. \\
& \left. \cdots \big\lvert\sigma^{n-1}(k)_{x_{\sigma^{n-1}(k)}}\big\rangle
\big\langle\sigma^{n-1}(k)_{x_{\sigma^{n-1}
(k)}}\big\lvert\sigma(k)_ {x_{\sigma(k)}}\big\rangle. \right.
\end{split}
\end{equation}
The permutation $\sigma$ can be represented by the chain of elements, up to a
cyclic permutation as
\begin{equation}\label{eq:sigmabig}
\sigma=[\sigma(k),\sigma^2(k),\ldots,\sigma^{n-1}(k),k].
\end{equation}
Note that in Eq. (\ref{eq:nUBsum2}), we only use the first $n-1$ elements of
this representation, since the index $k$ is not present anymore. In fact, for
our purposes, we can look at the first $n-1$ elements as an artificial
$(n-1)$-cycle $\sigma_\rvert$, and write Eq. (\ref{eq:nUBsum2}) as
\begin{equation}\label{eq:nUBsum3}
\sum_{\substack{\sigma\in S_n \\ \sigma\text{: n-cycle}}}\prod_{\substack{y=1 \\
y\ne k}}^n
\braket{y_{x_y}}{\sigma_\rvert(y)_
{x_{\sigma_\rvert(y)}}}.
\end{equation}
Nevertheless,
the summation still runs along all the $(n-1)!$ $n$-cycles, and not the new
$(n-1)$-cycles. What is the connection between these? Let's consider 
new $n$-cycles $\sigma_\pi$, by cyclically permuting the first $n-1$
elements in the representation Eq. (\ref{eq:sigmabig}) of $\sigma$:
\begin{equation}\label{eq:sigmapi}
\sigma_\pi=\big[\pi\big(\sigma(k)\big),\pi\big(\sigma^2(k)\big),\ldots,
\pi\big(\sigma^{n-1}(k)\big),k\big],
\end{equation}
where $\pi$ is a cyclic permutation of the elements $\sigma(k),\sigma^2(k),
\ldots,\sigma^{n-1}(k)$. There are $n-1$ different such
$\sigma_\pi$ permutations, each of which giving a new $n$-cycle. Nevertheless,
$\sigma_{\pi\rvert}$, the artificial $(n-1)$-cycles given by their first $n-1$
elements are all the same $\sigma_{\pi\rvert}\equiv\sigma_\rvert$, since they
are only cyclic permutations of each other. Thus, when summing up over all
$n$-cycles in Eq. (\ref{eq:nUBsum3}), we use each $(n-1)$-cycle $n-1$ times,
and hence we can write it as 
\begin{equation}\label{eq:nUBsum4}
(n-1)\sum_{\substack{\sigma\in S_{n-1} \\ \sigma\text{: (n-1)-cycle}}}\prod_
{\substack{y=1 \\ y\ne k}}^n \braket{y_{x_y}}{\sigma(y)_{x_{\sigma(y)}}},
\end{equation}
and comparing this with Eq. (\ref{eq:nUBsum}) we see that
\begin{equation}\label{eq:nUBsumfinal}
\sum_{\substack{\sigma\in S_(n-1) \\ \sigma\text{: (n-1)-cycle}}}\prod_
{\substack{y=1 \\ y\ne k}}^n \braket{y_{x_y}}{\sigma(y)_
{x_{\sigma(y)}}}=\frac{(n-2)!}{d^{n-2}} ~~~ \forall\{x_1,\ldots,x_n\}\setminus
\{x_k\} \in[d],
\end{equation}
i.e. the set $\{\ket{y_{x_y}}\}_{x_y=1}^d$ for $y\in\{1,\ldots,n\}\setminus
\{k\}$ forms an $(n-1)$UB.
\end{proof}
\begin{remark}
Note that Theorem \ref{thm:n_to_n-1} implies that for an optimal $n^d\to1$ QRAC
strategy it is enough to have $n$UB measurement bases, since the $n$UB condition
implies the uniformity of $\boldsymbol{c}_n$ and in fact the uniformity of all
$\boldsymbol{c}_k$. Thus, the proof of Theorem \ref{thm:main} is complete.
\end{remark}

\section{Existence of $n$-fold unbiased bases}\label{sec:existence}
So far, I only established a theoretical optimum for QRAC strategies. But
naturally the question arises: do these optimal measurement bases exist for any
dimension? In the $2^d\to1$ case, we are always provided with a pair of MUBs in
any dimension. It will turn out that this is not the case with $n$UBs in
general, although their existence problem is still unsolved.

\subsection{Low dimensions}

Theorem \ref{thm:n_to_n-1} provides a useful tool for searching for $n$UBs.
It implies that any subset of size $n-1$ of a set of $n$UBs should also form
an $(n-1)$UB, and thus eventually they should all form MUBs. MUBs are
excessively studied in the quantum information community (see \cite{onMUBs,
MUB2-5}), hence making the search for $n$UBs more tractable.

The easiest non-trivial search is for 3UBs. Following the above argument, every
3UB should be a triplet of MUBs. All the triplets are fully characterised in
dimensions 2, 3, 4 and 5, allowing for an exhaustive search \cite{MUB2-5}.
For the readers' convenience, let me recall a set of equivalence transformations
on MUBs \cite[Appendix A]{MUB2-5}, which are easily seen equivalence
transformations also on $n$UBs for any $n$:
\begin{defn}\label{defn:MUB_eqv}
Consider a set of $r$ MUBs ($n$UBs) described by complex matrices $B_i$,
$i=1,\ldots,r$ of size $d\times d$, that is, the elements of basis $i$ are the
columns of $B_i$. Two such lists are \textit{equivalent} to
each other, if they can be transformed into each othey by a succession of the
following five transformations:
\begin{enumerate}
\item an \textit{overall unitary} transofmation $U$ applied from the left,
\begin{equation}\label{eq:eqv_global_U}
\{B_1,\ldots,B_r\}\to U\{B_1,\ldots,B_r\},
\end{equation}
which leaves invariant all the scalar products.
\item $r$ \textit{diagonal unitary} transformations $D_i$ from the right which
attach phase factors to each column of the $r$ matrices,
\begin{equation}\label{eq:eqv_global_U}
\{B_1,\ldots,B_r\}\to \{B_1D_1,\ldots,B_rD_r\}.
\end{equation}
This exploits the fact that the overall phase of a quantum state drops out from
the conditions of MUBs ($n$UBs).
\item $r$ \textit{permutations} of the elements within each basis,
\begin{equation}\label{eq:eqv_global_U}
\{B_1,\ldots,B_r\}\to \{B_1P_1,\ldots,B_rP_r\},
\end{equation}
which is simply just relabeling basis elements. Here $P_i$ are unitary
permutation matrices, $P_iP_i^T=\unit$.
\item \textit{pairwise exchange} of two bases,
\begin{equation}\label{eq:eqv_global_U}
\{\ldots,B_i,\ldots,B_j,\ldots\}\to \{\ldots,B_j\ldots,B_i,\ldots\}.
\end{equation}
whish is simply relabeling the bases.
\item an \textit{overall complex conjugation}
\begin{equation}\label{eq:eqv_global_U}
\{B_1,\ldots,B_r\}\to \{\bar{B}_1,\ldots,\bar{B}_r\},
\end{equation}
which leaves invariant all the scalar products.
\end{enumerate}
\end{defn}

All equivalence classes or MUB triplets are known in dimensions 2, 3, 4 and 5.
After checking these triplets for the 3UB condition, I got the following
results:

In dimension 2, there is only one equivalence class of MUB triplets, which also
forms a 3UB. A representative of this class is:
\begin{equation}
\Bigg\{
\begin{pmatrix}
1 & 0 \\
0 & 1
\end{pmatrix},
\frac{1}{\sqrt{2}}
\begin{pmatrix}
1 & 1 \\
1 & -1
\end{pmatrix},
\frac{1}{\sqrt{2}}
\begin{pmatrix}
1 & 1 \\
i & -i
\end{pmatrix}
\Bigg\}.
\end{equation}
Although this is promising, observe that the existence of 3UBs in dimension 2 is
trivial in some sense. We already know that there exist 3 MUBs. Then, using
them, in the
characteristic polynomial (\ref{eq:charpoly_general}), the uniformity of $c_3$
gives the 3UB condition. But this is the determinant of the Gramian matrix
of 3 vectors for each $x$. These vectors must be linearly dependent in dimension
2, and we know that in this case, the Gram determinant is zero (see e.g.
\cite[Theorem 7.2.10]{gram}). Thus, the coefficients $(c_3)_x$ are uniform for
every $x$
(zero, in fact). Also note, that this set of 3UBs is not useful in the QRAC
game, as my argument only works for $d\ge n$. Nevertheless, these measurement
bases give the optimal strategy for the $3^2\to1$ QRAC \cite{QRACSR}.
The above observations can be generalized to the following result on $n$UBs:
\begin{prop}\label{prop:d+1UB}
If there exist $d+1$ $d$UBs in dimension $d$, then they also form a $(d+1)$UB.
\end{prop}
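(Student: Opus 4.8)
The plan is to reduce the $(d+1)$UB condition to the vanishing of a single Gram determinant. Concretely, fix one element $\ket{y_{x_y}}$ from each of the $d+1$ bases and form their $(d+1)\times(d+1)$ Gram matrix with entries $\braket{y_{x_y}}{y'_{x_{y'}}}$; its determinant is the quantity $M_x[\{1,\ldots,d+1\}]$ defined combinatorially in Eq. (\ref{eq:M_x[J]}). The left-hand side of the $(d+1)$UB condition in Definition \ref{def:nUB} is exactly the part of this determinant coming from the $(d+1)$-cycles, i.e.\ $(d+1)\text{UB}_x[\{1,\ldots,d+1\}]$ in the notation of Eq. (\ref{eq:ck_newterm}). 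So it suffices to show this quantity is constant in $x$.

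First I would extract the consequences of the hypothesis. Since the $d+1$ bases form $d$UBs, every $d$-element subset is a $d$UB, and iterating Theorem \ref{thm:n_to_n-1} downwards shows that every $k$-element subset is a $k$UB for all $2\le k\le d$. By Eq. (\ref{eq:ck_newterm_uniform}) this means $k\text{UB}_x[J]=\frac{(k-1)!}{d^{k-1}}$ is uniform in $x$ for every $J$ with $\lvert J\rvert=k\le d$ (the case $\lvert J\rvert=1$ being trivially $\braket{y_{x_y}}{y_{x_y}}=1$).

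Next I would expand $M_x[\{1,\ldots,d+1\}]$ along the lines of Eq. (\ref{eq:M_x[J]_decomp}), but regrouped according to the set partition $P$ of $\{1,\ldots,d+1\}$ induced by the disjoint cycles of each permutation. The key point, exactly as in the proof of Lemma \ref{lem:ck_constant}, is that summing over all cyclic permutations supported on a fixed block $B$ reproduces precisely $\lvert B\rvert\text{UB}_x[B]$, and that all permutations inducing the same partition share the sign $(-1)^{(d+1)-\lvert P\rvert}$. This yields
\[
M_x[\{1,\ldots,d+1\}]=\sum_{P}(-1)^{(d+1)-\lvert P\rvert}\prod_{B\in P}\lvert B\rvert\text{UB}_x[B],
\]
where $P$ ranges over all set partitions and $\lvert P\rvert$ is the number of blocks. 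The single-block partition contributes $(-1)^d\,(d+1)\text{UB}_x[\{1,\ldots,d+1\}]$, the only genuinely new term; every other partition has all its blocks of size at most $d$, so by the previous step its contribution is a product of quantities uniform in $x$, and hence the sum of all such terms is a constant $C$ independent of $x$.

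Finally, because $d+1$ vectors in $\mathbb{C}^d$ are necessarily linearly dependent, their Gram determinant vanishes (see \cite[Theorem 7.2.10]{gram}), so $M_x[\{1,\ldots,d+1\}]=0$ for every $x$. The expansion then gives $(-1)^d(d+1)\text{UB}_x[\{1,\ldots,d+1\}]+C=0$, so $(d+1)\text{UB}_x[\{1,\ldots,d+1\}]=(-1)^{d+1}C$ is independent of $x$; that is exactly the $(d+1)$UB condition, with uniform value pinned to $\frac{d!}{d^d}$ by Eq. (\ref{eq:ck_newterm_uniform}). I expect the main obstacle to be the regrouping step: a single permutation's term is not uniform in $x$, and uniformity only emerges after summing over all cyclic permutations within each block to form the $\lvert B\rvert\text{UB}_x[B]$ factors, so the bookkeeping of cycle types and signs must be handled carefully.
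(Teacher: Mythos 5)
Your proof is correct and follows essentially the same route as the paper's: identify the $(d+1)$UB expression as the only potentially non-uniform term in the Gram determinant $(c_{d+1})_x$, and use the vanishing of that determinant for $d+1$ vectors in $\mathbb{C}^d$ to force its uniformity. You simply spell out the cycle-partition and sign bookkeeping and the downward iteration of Theorem \ref{thm:n_to_n-1} that the paper's two-line proof leaves implicit.
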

\begin{proof}
Assuming that the $d+1$ bases are $d$UBs, in the coefficients $(c_{d+1})_x$, 
all the terms are uniform, except for the $(d+1)$UB term. But $(c_{d+1})_x$ is 
a Gram determinant of $d+1$ vectors in dimension $d$, thus equals zero for
every $x$.
\end{proof}
\begin{remark}
It is known that in dimension $d$, the maximal number of MUBs is $d+1$. Also
note that if $d>2$, the $d$UB conditions on a set of $d+1$ bases are more
restrictive than the MUB conditions. It is then unlikely that in any
dimension greater than 2, there exist $d+1$ $d$UBs, and thus Proposition
\ref{prop:d+1UB} may practically only be useful in the $d=2$ case.
\end{remark}
Since there are no more than 3 MUBs in dimension 2, there cannot exist $n$UBs
with $n\ge4$.

In dimension 3, there is also only one equivalence class of MUB triplets. If
we check for the 3UB condition, it does not satisfy it, meaning that there are
no $n$UBs in dimension 3 for $n\ge3$.

In dimension 4, there is a three-parameter family of MUB triplets. Nevertheless,
it turns out that they do not satisfy the 3UB condition for any value of these
parameters, thus concluding that there are no $n$UBs in dimension 4 for $n\ge3$.

In dimension 5, there are two equivalence classes of MUB triplets, none of them
satisfying the 3UB condition, meaning that also in dimension 5, there are no
$n$UBs for $n\ge3$.

\subsection{High dimensions, probabilistic arguments}

In dimensions higher than 5, not all the equivalence classes of MUB triplets
are known. There are some explicit constructions in prime and prime power
dimensions for obtaining a full set of $d+1$ MUBs, from which we can test
arbitrary subsets for the $n$UB conditions. In composite dimensions that are not
powers of primes, even the maximal number of MUBs is unknown. The lowest such
dimension, 6 is excessively studied, pointing to a direction that there exist
only 3 MUBs \cite{MUB6_3,MUB6_4,MUB6_5,MUB6_6}.

In dimension six, I checked a one-parameter family of MUB triplets
\cite{MUB6_1}, which lead to no success. Apart from that, I checked
some known constructions for dimensions 7, 8 and 9 \cite{onMUBs,MUB789}, also
without any success. Nevertheless, these searches weren't exhaustive, given the
fact that there is no characterization of all the equivalence classes of MUB
triplets in these dimensions.

The situation thus seems a little desperate at this point. On the bright side,
when we move to high dimensions, there are some probabilistic arguments
supporting the possibility of existence of $n$UBs. Consider $n$ uniformly
random states on $\mathbb{C}^d$, constructed as follows: fix a state $\ket{x_1}$
in the computational basis. Then draw unitaries $U_2,U_3,\ldots,U_n$ uniformly
and independently
with respect to the Haar measure on $\mathbb{U}_d$, and apply it to the other
states of the computational basis: $U_y\ket{x_y}$ with $y=2,3,\ldots,n$. We will
be interested in the expectation value of the $n$UB expression Eq.
(\ref{eq:nUB}) for these states, with the correspondence $\ket{y_{x_y}}:=
U_y\ket{x_y}$:
\begin{equation}\label{eq:nUBrnd}
\mathbb{E}\Big(
\sum_{\substack{\sigma\in S_n \\ \sigma\text{: n-cycle}}}\prod_{y=1}^n
\langle{x_y}\rvert U_y^\dagger U_{\sigma(y)}\lvert x_{\sigma(y)}\rangle\Big),
\end{equation}
where the expectation value is over $\mathbb{U}_d$ with the Haar measure, and we
say that $U_1=\unit$.

Since the expectation value is linear, we can concentrate on one term in the
above sum. Pick the term with $\sigma=[1,2,\ldots,n]$, and write out the
expectation value:
\begin{equation}\label{eq:nUBexp}
\begin{split}
& \left. \mathbb{E}\Big(
\langle{x_1}\rvert U_2\lvert x_2\rangle\langle x_2\rvert U_2^\dagger U_3\lvert
x_3\rangle\langle x_3\rvert U_3^\dagger \cdots  U_n\lvert
x_n\rangle\langle x_n\rvert U_n^\dagger\lvert x_1\rangle\Big) \right. \\
= & \left. \mathbb{E}\bigg(
\langle{x_1}\rvert U_2\lvert x_2\rangle\langle x_2\rvert U_2^\dagger 
\sum_{k_2=1}^d\lvert k_2\rangle\langle k_2\rvert U_3\lvert
x_3\rangle\langle x_3\rvert U_3^\dagger 
\sum_{k_3=1}^d\lvert k_3\rangle\langle k_3\rvert \cdots \right. \\
& \left. \sum_{k_{n-1}=1}^d\lvert k_{n-1}\rangle\langle
k_{n-1}\rvert U_n\lvert x_n\rangle \langle x_n\rvert U_n^\dagger\lvert x_1
\rangle\bigg) \right. \\
= & \left. \sum_{k_2,k_3,\ldots,k_{n-1}=1}^d\mathbb{E}\big(\langle{x_1}\rvert
U_2 \lvert x_2\rangle\langle x_2\rvert U_2^\dagger\lvert k_2\rangle\big)
\mathbb{E}\big(\langle{k_2}\rvert U_3 \lvert x_3\rangle\langle x_3\rvert U_3^
\dagger\lvert k_3\rangle\big)\cdots \right. \\
& \left. \cdots\mathbb{E}\big(\langle{k_{n-1}}\rvert U_n \lvert x_n\rangle
\langle x_n\rvert U_n^ \dagger\lvert x_1\rangle\big), \right.
\end{split}
\end{equation}
inserting identities in between the different
unitaries in the first equality, and using the linearity of the expectation
value and the fact that the unitaries are independently drawn, in the second
one. Now, let's calculate in general the above expectation values:
\begin{equation}\label{eq:termexp}
\begin{split}
& \left. \mathbb{E}\big(\langle x_i\rvert U_y \lvert x_y\rangle\langle x_y
\rvert U_y^\dagger\lvert x_j\rangle\big) = \int \langle x_i\rvert U \lvert x_y
\rangle\langle x_y \rvert U^\dagger\lvert x_j\rangle \text{d}U =\langle x_i
\rvert \Big(\int U \lvert x_y \rangle\langle x_y \rvert U^\dagger
\text{d}U\Big)\lvert x_j\rangle \right. \\
= & \left. \langle x_i\rvert\frac{\unit}{d}\lvert x_j\rangle = \frac1d \langle 
x_i\lvert x_j\rangle=\frac{\delta_{ij}}{d}, \right.
\end{split}
\end{equation}
where d$U$ is the Haar measure, and the linearity of the inner product is used.
Plugging this into Eq. (\ref{eq:nUBexp}), we get that
\begin{equation}\label{eq:nUBexpres}
\begin{split}
& \left. \mathbb{E}\Big(
\langle{x_1}\rvert U_2\lvert x_2\rangle\langle x_2\rvert U_2^\dagger U_3\lvert
x_3\rangle\langle x_3\rvert U_3^\dagger U_4\cdots U_{n-1}^\dagger U_n\lvert
x_n\rangle\langle x_n\rvert U_n^\dagger\lvert x_1\rangle\Big) \right. \\
= & \left. \sum_{k_2,k_3,\ldots,k_{n-1}=1}^d\frac{1}{d^{n-1}}
\langle x_1\lvert k_2\rangle\langle k_2\rvert k_3\rangle\langle k_3\rvert
\cdots\lvert k_{n-1}\rangle\langle k_{n-1}\rvert x_1\rangle = \frac{1}{d^{n-1}},
\right.
\end{split}
\end{equation}
and observe that for any $\sigma$ $n$-cycle, we get the same value. Thus, we can
conclude, that
\begin{equation}\label{eq:nUBexpfinal}
\mathbb{E}\Big(
\sum_{\substack{\sigma\in S_n \\ \sigma\text{: n-cycle}}}\prod_{y=1}^n
\langle{x_y}\rvert U_y^\dagger U_{\sigma(y)}\lvert x_{\sigma(y)}\rangle\Big)
=\frac{(n-1)!}{d^{n-1}}.
\end{equation}
This means that the expectation value of the $n$UB expression for $n$ 
independent, uniformly random states is exactly the $n$UB condition. This is
promising, as it implies that a set of states forming $n$UBs could even be
something typical.

In general, one would then prove existence by using the probabilistic method
\cite{probmethod}, thus invoking concentration of measure, in particular,
L\'evy's lemma (see e.g. \cite[Theorem 4]{levy1} or \cite{levy2}):
\begin{lem}[L\'evy's lemma]\label{lem:levy}
Let $f:S^{2d-1}\to\mathbb{R}$ be Lipschitz-continuous with Lipschitz constant
$\eta$, i.e.
\begin{equation}\label{eq:Lipschitz}
\lvert f(x)-f(y)\rvert \le\eta\cdot \lVert x-y\rVert,
\end{equation}
where $\lVert . \rVert$ is the Eucledian norm on $\mathbb{R}^{2d}$. Then, drawing
a point $x\in S^{2d-1}$ randomly with respect to the uniform measure on the
sphere yields
\begin{equation}\label{eq:Levy}
\mathbb{P}(\lvert f(x)-\mathbb{E}_f\rvert\ge\epsilon)\le2\text{exp}\bigg
(\frac{d\epsilon^2}{9\pi^3\eta^2}\bigg)
\end{equation}
for all $\epsilon\ge0$.
\end{lem}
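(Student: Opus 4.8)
The plan is to establish (\ref{eq:Levy}) by the classical route of spherical isoperimetry: concentrate $f$ first around a \emph{median} and then transfer the bound to the mean. Throughout, regard $S^{2d-1}$ as sitting in $\mathbb{R}^{2d}$ equipped with the normalized uniform measure $\mu$, and write $N=2d$ for the ambient dimension, so that the relevant curvature/dimension parameter is of order $N$.

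First I would introduce a median $M_f$ of $f$, i.e.\ a value with $\mu(f\le M_f)\ge\tfrac12$ and $\mu(f\ge M_f)\ge\tfrac12$, which exists by continuity of $f$. Set $A=\{x:f(x)\le M_f\}$, so that $\mu(A)\ge\tfrac12$. The Lipschitz hypothesis (\ref{eq:Lipschitz}) supplies the key inclusion: any point $y$ whose Euclidean distance to $A$ is at most $\epsilon/\eta$ satisfies $f(y)\le M_f+\epsilon$. Hence the super-level set $\{f>M_f+\epsilon\}$ is disjoint from the $(\epsilon/\eta)$-neighborhood $A_{\epsilon/\eta}$ of $A$, giving
\begin{equation}
\mu(f>M_f+\epsilon)\le 1-\mu(A_{\epsilon/\eta}).
\end{equation}

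The heart of the argument is then the spherical isoperimetric inequality, which I would invoke as a black box from the cited references: among all measurable sets of a fixed measure, the geodesic cap minimizes the measure of every $t$-neighborhood. Combined with the explicit estimate for the complement of the neighborhood of a half-sphere, this yields a bound of the form $1-\mu(A_t)\le C\exp(-cN t^2)$ with universal constants $C,c$, where $t$ is measured in chordal (Euclidean) distance after comparing it to the geodesic distance. Taking $t=\epsilon/\eta$ controls $\mu(f>M_f+\epsilon)$; applying the identical reasoning to $-f$ controls $\mu(f<M_f-\epsilon)$, and summing gives sub-Gaussian concentration of $f$ about its median, $\mu(|f-M_f|\ge\epsilon)\le 2\exp(-cN\epsilon^2/\eta^2)$, which is already of the claimed shape.

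The final and most delicate step is to replace the median by the mean and to pin down the numerical constant. I would bound $|M_f-\mathbb{E}_f|\le\int_0^\infty \mu(|f-M_f|\ge s)\,\mathrm{d}s$, which is itself controlled by the median-concentration estimate above and is therefore of order $\eta/\sqrt{N}$; absorbing this shift into $\epsilon$ only rescales the constant. The main obstacle is quantitative rather than conceptual: the deep geometric input (cap extremality) is taken from the literature, while obtaining the precise constant $9\pi^3$ in (\ref{eq:Levy}) requires careful bookkeeping of the geodesic-to-chordal comparison, the explicit cap-measure estimate, and the median-to-mean correction. I also note that for the right-hand side of (\ref{eq:Levy}) to be a genuine decreasing tail bound, the exponent must carry a negative sign, i.e.\ $2\exp\!\big(-\tfrac{d\epsilon^2}{9\pi^3\eta^2}\big)$.
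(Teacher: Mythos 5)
The paper offers no proof of this lemma at all: it is imported as a black box, with the citations to the literature standing in for the argument. So there is nothing in-paper to match your proposal against, and the useful comparison is between your sketch and the standard proof in those references --- which is exactly what you have reproduced. The skeleton is correct and canonical: (i) concentrate about a median $M_f$ using the Lipschitz inclusion $A_{\epsilon/\eta}\subseteq\{f\le M_f+\epsilon\}$ together with the spherical isoperimetric inequality (cap extremality) to get $\mu(|f-M_f|\ge\epsilon)\le 2\exp(-c\,d\,\epsilon^2/\eta^2)$, and (ii) transfer from median to mean via $|M_f-\mathbb{E}_f|\le\int_0^\infty\mu(|f-M_f|\ge s)\,\mathrm{d}s=O(\eta/\sqrt{d})$, absorbing the shift into $\epsilon$. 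Two caveats. First, your argument as written establishes the inequality only up to an unspecified universal constant; the specific value $9\pi^3$ is not derived, and since that constant is the only nontrivial ``content'' of the lemma beyond its qualitative shape, the proof is strictly speaking incomplete until the geodesic-to-chordal comparison and the cap-measure integral are carried out explicitly (this is where the powers of $\pi$ enter). For the purposes to which the paper puts the lemma --- a qualitative remark that exhaustive concentration arguments face obstructions anyway --- this gap is harmless. Second, you are right that the exponent in Eq.~(\ref{eq:Levy}) must carry a minus sign; as printed the right-hand side exceeds $2$ and the bound is vacuous, so this is a typographical error in the paper's statement, not a defect of your argument.
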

This means that for a real-valued function on pure states of dimension $d$,
the probability of deviating from its expectation value decreases exponentially
with increasing dimension, provided an appropriate Lipschitz constant.

The function on $d$-dimensional pure states, whose expectation value are
calculated in Eq. \eqref{eq:termexp} is $f_{ij}:\ket{\psi}\to\braket{x_i}{\psi}
\braket{\psi}{x_j}$. Now, even though its expectation value is real, in general
it is not a real-valued function, thus the above lemma does not apply.
Also, if one wants to see uniformity of the
expression $n$UB$_x$ for all $x$, the states are not independent anymore, as
certain subsets have to form orthogonal bases.

To sum it up, the fact that the expectation value of the $n$UB expression is
what we want it to be is promising. Although, to prove existence, the usual
probabilistic method faces difficulties. Some more refined concentration of
measure results on complex valued functions, or functions on unitaries would be
needed, if one wanted to prove existence this way.

\section{Applications}\label{sec:applications}

In this section, I give a few (potential) applications of the above defined
bases. We certainly know a lot about their implications on QRAC strategies now,
and I outline some possible applications on other tasks otherwise related to
MUBs. Apart from the protocols mentioned here, one could consider other tasks
usually discussed in the context of MUBs. Note that if the bases in question
don't exist, they still provide a
bound on what can we achieve within the framework of quantum mechanics, and
this bound is close-to-tight, at least in the QRAC scenario.

\subsection{Upper bounds on QRAC success probabilities}

Naturally, $n$UBs in dimension $d$ provide optimal $n^d\to1$ QRAC measurements,
as long as
$d\ge n$. In the case when they exist, this is a tight upper bound on quantum
strategies. Nevertheless, in the case when they don't exist, they still give a
close-to-tight upper bound. To demonstrate this, I provide a table with the
ASP of the optimal classical, the MUB and the $n$UB quantum
strategies for some simple cases:

\begin{center}
\begin{tabular}{| c | c | c | c | c |}
\hline
\multicolumn{2}{|c|}{ } & classical & MUB & $n$UB \\ \hline
$d=3$ & $n=3$ & 0.6296 & 0.6971 & 0.6989  \\ \hline
\multirow{2}{*}{$d=4$} & $n=3$ & 0.5625 & 0.6443 & 0.6466 \\ \cline{2-5}
& $n=4$ & 0.5313 & 0.5779 & 0.5872 \\ \hline
\multirow{2}{*}{$d=5$} & $n=3$ & 0.5200 & 0.6109 & 0.6114 \\ \cline{2-5}
& $n=4$ & 0.4880 & 0.5430 & 0.5477 \\ \hline
\end{tabular}
\end{center}

Here, the classical values are computed using the method of \cite{binaryRAC},
the MUB values are results of straightforward calculations exploiting the
known equivalence classes, and the $n$UB values are computed by calculating
$\lambda_x^{\text{max}}$, assuming uniform coefficients in $x$ for every $k$ in
the characteristic polynomial, Eqs. \eqref{eq:charpoly_general}, \eqref{eq:ck}.

On the other hand, I note that the see-saw optimization, also used in
\cite{QRACMUB}, results in MUB measurements for $n=3$, in dimensions 3-7.
This serves as numerical evidence for MUB optimality whenever $n$UBs don't
exist, and also for the non-existence of 3UBs in dimensions 6, 7. An open
question remains whether MUBs provide optimal measurements for any $n^d\to1$
QRAC protocol.

Remember that any pair of $n$UBs also form MUBs, thus in this sense, MUBs do
provide optimal measurements for general QRACs. Although, when $n$UBs don't
exist in the given dimension, the question of optimal QRAC strategies becomes
more difficult. To see this, consider the optimization of a
$3^d\to1$ QRAC in some dimension where 3UBs don't exist. This means that the
vector $\boldsymbol{c}_3$ cannot be set uniform. On the other hand, there exist
3 MUBs in any dimension, thus $\boldsymbol{c}_2$ can always be set uniform.
Although, this doesn't imply that MUBs are optimal in this case, as the
ASP $\bar{p}(\boldsymbol{c}_2,\boldsymbol{c}_3)$ is a
Schur-concave function of $\boldsymbol{c}_2$ and $\boldsymbol{c}_3$, but this
is an independent property on the two vectors. This then only
means that $\bar{p}':=\bar{p}(\boldsymbol{c}'_2,\boldsymbol{c}'_3)<\bar{p}$,
whenever $\boldsymbol{c}_2\succ\boldsymbol{c}'_2$ \textit{and} $\boldsymbol{c}_3
\succ\boldsymbol{c}'_3$, where $\succ$ expresses majorization (see e.g.
\cite{inequalities}). We cannot say anything about the relation of $\bar{p}'$
and $\bar{p}$, when e.g. $\boldsymbol{c}_2\succ\boldsymbol{c}'_2$ and
$\boldsymbol{c}_3\prec\boldsymbol{c}'_3$, and one can construct bases such that
these relations hold. Nevertheless, the above mentioned numerical evidence
supports the optimality of MUBs for $3^d\to1$ QRACs.

I note, as it's pointed out in \cite{QRACMUB}, that if we restrict
ourselves to MUB optimization, then different equivalence classes can yield
different ASPs. The simplest case is $d=5$, $n=3$,
where the two inequivalent MUB triplets perform differently. This is because
there don't exist 3UBs in dimension 5, but the two equivalence classes have
different 3UB properties, and the one being more uniform gives a better QRAC
strategy.

Since polynomials up to order 4 are analytically solvable, the $n$UB method
provides analytic bounds for $n^d\to1$ QRAC ASPs for $n=2,3,4$
with $d\ge n$. I note that for $n=2$, this bound is tight and is previously
found in \cite{QRACMUB}:
\begin{equation}\label{eq:2to1bound}
\bar{p}_{n=2}\le\frac12\Big(1+\frac{1}{\sqrt{d}}\Big)
\end{equation}
\begin{equation}\label{eq:3to1bound}
\bar{p}_{n=3}\le\frac13\Bigg(1+\frac{d}{\Big(d^4+\sqrt{d^8-d^9}\Big)^{1/3}}
+\frac{\Big(d^4+\sqrt{d^8-d^9}\Big)^{1/3}}{d^2}\Bigg).
\end{equation}
The formula for $n=4$ is too complicated to present here, but it is
the greatest zero of the polynomial
\begin{equation}\label{eq:4to1eq}
\lambda^4-4\lambda^3+6\bigg(1-\frac1d\bigg)\lambda^2-4\bigg(1-\frac3d+\frac{2}
{d^2}\bigg)\lambda+1- \frac6d+\frac{11}{d^2}-\frac{6}{d^3},
\end{equation}
divided by 4.

If $n>4$, the polynomials are not solvable analytically anymore, nevertheless,
one can solve them numerically up to machine precision (e.g. by Newton's method,
with starting point $n$). This gives an upper bound on the given $n^d\to1$
QRAC ASP. Also, if $n$ MUBs exist in dimension $d$, they give a lower bound.
In low dimensions, these bounds are close-to-tight, and this is expected in
higher dimensions as well, thus one has a good estimate on optimal QRAC ASPs for
a wide class of $n$ and $d$.

\subsection{Entropic uncertainty relations}

Entropic uncertainty relations are a refined version of Heisenberg's uncertainty
relations (see the seminal paper of Maassen and Uffink \cite{maassen_uffink},
and the surveys \cite{uncertainty_survey,uncertainty_survey2}).
Consider $n$ observables on $\mathbb{C}^d$, described by projections on the
states $\{\ket {y_{x_y}}\}_{x_y=1}^d$, $y\in[n]$. Then define the
probabilities $p^y_{x_y}=\lvert
\braket{y_{x_y}}{\psi}\rvert^2$ for some $\ket{\psi}\in\mathbb{C}^d$. The aim is
then to put a lower bound on $\sum_{y=1}^nH(\{p^y_{x_y}\})$, where $H$ is the
Shannon entropy. It is shown in \cite{maassen_uffink}, that when $n=2$,
\begin{equation}\label{eq:Maassen_Uffink}
H(\{p^y_{x_y}\})+H(\{p^z_{x_z}\})\ge-\log{c},
\end{equation}
where $c=\max_{x_y,x_z}\lvert\braket{y_{x_y}}{z_{x_z}}\rvert^2$. This bound is
independent on the state $\ket{\psi}$, and the lowest possible value of $c$ is
attained by mutually unbiased bases, for which $c=\frac1d$.

In the case of more than two observables, no general tight bound is known.
Based on the fact that the bound for $n=2$ is related to the uniformity of
MUBs, I propose that $n$UBs could provide potential means of exploring and
understanding entropic uncertainty relations for $n$ observables.

\subsection{Information locking}

An information theoretical task closely related to entropic uncertainty
relations is that of information locking (see \cite{locking1,locking2} for
detailed
description). Here, classical correlations are hidden (locked) in quantum
states, until a key is revealed. It turns out that by revealing this extra
information, arbitrarily large increase can be obtained in the correlations. In
the simplest case, this means that one party is encoding a classical dit in a
qudit, using one of two mutually unbiased bases. Sending this qudit, but not the
information on the encoding basis (one bit key) to a receiver leaves them with
very limited classical correlation, since measuring in the wrong basis provides
no information on the encoded dit whatsoever. Sending the key, on the other
hand reveals the full information, thus increases classical correlation to its
maximal value.

It is known that for a one-bit key, corresponding to two possible encoding
bases, mutually unbiased bases provide optimal locking properties. I
propose that in the case of $n$ possible encoding bases, $n$UBs could provide
close-to-tight bounds on locking tasks.

\section{Foundational implications}\label{sec:foundations}

Apart from their use in information theoretical protocols, and providing bounds
on certain tasks, the question of existence of $n$UBs raises some fundamental
questions about the quantum world. One of these is solely the structure of
quantum states, which we still strive to understand, especially in higher
dimensions. On the other hand, considering the fact that in the QRAC scenario,
$n$UBs are extremely close to what is achievable within quantum mechanics, it
is natural to ask if their existence is prohibited merely by the formulation
of quantum mechanics, or is it some fundamental property of Nature. A
foundational question seemingly well-fit for investigating this problem is that
of the existence of genuine high-order interference in Nature.

\subsection{Geometry of quantum states}

While the mathematical formulation of quantum states is clear, we are struggling
to characterize their geometry, especially in dimensions higher than two, where
the Bloch-sphere ceases to provide an intuitive picture. Whenever we impose some
conditions on certain states, such as the MUB conditions, we have an option to
characterize quantum states accordingly. For instance, the long-standing
question of the number MUBs in a general dimension 
could allow us to characterize the behaviour of quantum states in
different dimensions, according to the unbiasedness one can introduce in
certain protocols. In the same spirit, understanding how $n$UBs can or cannot be
constructed, could give a more general characterisation.

\subsection{Genuine $n$-th order interference}

It was noted by Sorkin \cite{sorkin}, that quantum mechanics only
exhibits second-order genuine interference. Simply saying, having a two-slit
experiment with quantum particles, the interference pattern cannot be written
in terms of one-slit experiments. On the other hand, already a three-slit
experiment can be written in terms of one- and two-slit experiments. This
follows simply from the mathematical formulation of quantum mechanics. The
natural question then arises, whether Nature admits genuine higher-order
interference, or if not, what is the fundamental reason behind it. It is worth
to note that even if there is higher-order interference, there is
experimental evidence that it is suppressed by at least a factor of $\sim10^2$
by second-order interference \cite{interference_experiment1,
interference_experiment2}.

For studying this question, researchers have come up with theories (in general,
general probabilistic theories) that exhibit genuine higher-order interference
(see e.g. \cite{density_cubes,qqt}, or \cite{higher-order} for a review on
them). For now, I will focus on the theory of Density Cubes of
Daki\'c et al. \cite{density_cubes}. They point out that the description of
quantum states by density matrices $\rho_{ij}$ inherently only allows for
interference between two levels of a quantum state. To overcome this limitation,
they introduce Density Cubes, that is, states described by 3-index tensors,
$\rho_{ijk}$. They construct some (incomplete) bases, and show that usual
quantum states form a subset of these generalized state space. 
Nevertheless, it is pointed out in \cite{higher-order} that the axioms of this
theory are insufficient to uniquely characterise it.

In any case, if one considers a set of 3UBs, $\{\ket{y_{x_y}}\}_{x_y}$,
$\{\ket{z_{x_z}}\}_{x_z}$, $\{\ket{a_{x_a}}\}_{x_a}$, then in the 3UB condition
\begin{equation}\label{eq:3UB}
\braket{y_{x_y}}{z_{x_z}}\braket{z_{x_z}}{a_{x_a}}\braket{a_{x_a}}{y_{x_y}}+
\braket{y_{x_y}}{a_{x_a}}\braket{a_{x_a}}{z_{x_z}}\braket{z_{x_z}}{y_{x_y}}=
\frac{2}{d^2}~~~\forall x_y,x_z,x_a\in[d]
\end{equation}
correlations of 3 ``levels'' appear. It is then natural to think that if 3UBs
don't exist in some dimension within the framework of quantum mechanics, some
analogue might exist within the Density Cube framework. Note that the idea of
Density Cubes and this analogue can be generalized to any $n$ other than 3.

Existence of $n$UBs thus could be connected to the existence of $n$-th order
interference in Nature. Then, understanding 
the fundamental reasons why these theories could or could not describe Nature,
could lead to understanding the existence of $n$UBs. Or the other way around,
understanding the existence problem of $n$UBs could lead to non-trivial
statements on $n$-th order interference in Nature.

\section*{Acknowledgements}
I would like to thank Jakub Borka\l a for providing calculations on
general QRAC strategies, Edgar Aguilar and Richard K\"ung for fruitful
discussions, and Piotr Mironowicz and Debashis Saha for providing numerical
results on some QRAC protocols. The work is supported by the NCN grant
Sonata UMO-2014/14/E/ST2/00020

\bibliography{nUBbib}

\bibliographystyle{ieeetr}

\end{document}